\newcommand{\zigzag}{zigzag}
\title{Parameterized Complexity of Minimum Membership Dominating Set} 
\titlerunning{Parameterized Complexity of Minimum Membership Dominating Set} 
\author{Akanksha Agrawal\inst{1} \and Pratibha Choudhary 
\inst{2} \and N. S. Narayanaswamy\inst{1} \and K. K. Nisha\inst{1} \and Vijayaragunathan Ramamoorthi\inst{1}}
\institute{Department of Computer Science and Engineering, IIT Madras, Chennai, India\\
\email{\{akanksha,swamy,kknisha,vijayr\}@cse.iitm.ac.in} \and Faculty of Informatics, Czech Technical University in Prague, Czech Republic\\
\email{pratibha.choudhary@fit.cvut.cz}
}
\authorrunning{Agrawal et al.}
\newcommand{\bli}{\textbf{i}}
\newcommand{\ch}{c}
\renewcommand{\D}{\mathcal{D}}
\newcommand{\mcc}{\textsc{Multi-Colored Clique} }
\newcommand{\mcis}{\textsc{Multi-Colored Independent Set} }
\newcommand{\srdsws}{$[\sigma,\rho]$-\textsc{Dominating Set}}
\newcommand{\jlds}{$[j,\ell]$-\textsc{Dominating Set}}
\newcommand{\rjds}{$[1,j]$-\textsc{Dominating Set} }
\newcommand{\T}{\mathcal{T}}
\newcommand{\X}{X}
\renewcommand{\Xi}{X_\bli}
\newcommand{\tw}{\textbf{tw}}
\newcommand{\pw}{\textbf{pw}}
\newcommand{\mmds}{\textsc{MMDS}\xspace}
\newtheorem{observation}[theorem]{\textbf{Observation}}
\newtheorem{claimnew}[theorem]{Claim}
\begin{document}

\maketitle 

\begin{abstract}
Given a graph $G=(V,E)$ and an integer $k$, the {\sc Minimum Membership Dominating Set} (\mmds) problem seeks to find a dominating set $S \subseteq V$ of $G$ such that for each $v \in V$, $|N[v] \cap S|$ is at most $k$. 
We investigate the parameterized complexity of the problem and obtain the following results about {\sc MMDS}: 
\begin{enumerate}
\item W[1]-hardness of the problem parameterized by the pathwidth (and thus, treewidth) of the input graph. 
\item \textsc{W}[1]-hardness parameterized by $k$ on split graphs.
\item An algorithm running in time $2^{\mathcal{O}(\textbf{vc})} |V|^{\mathcal{O}(1)}$, where $\textbf{vc}$ is the size of a minimum-sized vertex cover of the input graph.
\item An ETH-based lower bound showing that the algorithm mentioned in the previous item is optimal. 
\end{enumerate}
\end{abstract}
\section{Introduction}

For a graph $G = (V,E)$, a set $S \subseteq V$ is a {\em dominating set} for $G$, if for each $v \in V$, either $v \in S$, or a neighbor of $v$ in $G$ is in $S$. The \textsc{Dominating Set} problem takes as input a graph $G=(V,E)$ and an integer $k$, and the objective is to test if there is a dominating set of size at most $k$ in $G$. The \textsc{Dominating Set} problem is a classical NP-hard problem~\cite{GareyJohnson79}, which together with its variants, is a well-studied problem in Computer Science. It is also known under standard complexity theoretic assumption that, \textsc{Dominating Set} cannot admit any algorithm running in time $f(k)\cdot|V|^{\mathcal{O}(1)}$ time, where $k$ is the size of dominating set.\footnote{More formally, in the framework of parameterized complexity (see Section~\ref{sec:prelim} for definitions), the problem is W[2]-hard, and thus we do not expect any FPT algorithm for the problem, when parameterized by the solution size.} A variant of \textsc{Dominating Set} that is of particular interest to us in this paper, is the one where we have an additional constraint that the number of closed neighbors that a vertex has in a dominating set is bounded by a given integer as input.\footnote{For a vertex $v$ in a graph $G = (V,E)$, the closed neighborhood of $v$ in $G$, $N_G[v]$, is the set $\{u \in V \mid \{a,b\} \in E\} \cup \{v\}$.} As \textsc{Dominating Set} is a notoriously hard problem in itself, so naturally, the above condition does not make the problem any easier. The above variant has been studied in the literature, and several hardness results are known for it~\cite{MeybodiFMP20}. Inspired by such negative results, in this paper, we remove the size requirement of the dominating set that we are seeking, and attempt to study the complexity variation for such a simplification. We call this version (to be formally defined shortly) of the \textsc{Dominating Set} problem as \textsc{Minimum Membership Dominating Set }(\textsc{MMDS}, for short). For a graph $G =(V,E)$, a vertex $u \in V$ and a set $S \subseteq V$, the {\em membership} of $u$ in $S$ is $M(u, S) = |N[u] \cap S|$. Next we formally define the  \textsc{MMDS} problem.
\begin{tcolorbox}
\textsc{Minimum Membership Dominating Set (MMDS)}\\
\textbf{Input}: A graph $G=(V, E)$ and a positive integer $k$.\\
\textbf{Parameter:} $k$.\\
\textbf{Question}: Does there exist a dominating set $S$ of $G$ such that $\max_{u\in V}M(u, S) \leq k$?
\end{tcolorbox}
We refer to a solution of MMDS as a $k$-membership dominating set ($k$-mds). Unless, otherwise specified, for MMDS, by $k$ we mean the membership.
The term ``membership'' is borrowed from a similar version of the \textsc{Set Cover} problem by Kuhn \textit{et al.}~\cite{KuhnRWWZ05}, that was introduced to model reduction in interference among transmitting base stations in cellular networks. 
\vspace{1mm}
\noindent
{\bf Our results.} We prove that  the \textsc{MMDS} problem is $\NP$-Complete and study the problem in the realm of parameterized complexity. 
\begin{theorem}
\label{thm:planarhard}
The MMDS problem is \NP-complete on planar bipartite graphs for $k=1$. 
\end{theorem}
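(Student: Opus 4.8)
The plan is to first observe that the case $k=1$ collapses to a classical problem. Since every dominating set $S$ satisfies $|N[v]\cap S|\ge 1$ for all $v$, the extra requirement $\max_{v}|N[v]\cap S|\le 1$ forces $|N[v]\cap S|=1$ for \emph{every} $v\in V$; equivalently, $S$ is an independent set whose closed neighbourhoods $\{N[s]:s\in S\}$ partition $V$. Hence a $1$-mds is precisely an \emph{efficient dominating set} (perfect code) of $G$. Membership in \NP{} is immediate (indeed for every $k$): given a candidate $S$ one checks in polynomial time that it dominates $G$ and that no vertex has two vertices of $S$ in its closed neighbourhood.

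For \NP-hardness on planar bipartite graphs I would reduce from \textsc{Planar 3-Dimensional Matching} (a planar-incidence-graph restriction of \textsc{Exact Cover by 3-Sets}), which is known to be \NP-complete. Let the instance have ground set $W$ and triple family $\mathcal{T}$, with planar bipartite incidence graph $H$ (ground elements on one side, triples on the other); we may assume every element lies in at least one triple, else the instance is a trivial \textsc{no}-instance and we output the fixed \textsc{no}-instance $C_4$. Build $G$ from $H$ by attaching to each triple-vertex $t\in\mathcal{T}$ a fresh pendant vertex $t'$. Then $G$ is planar (adding degree-one vertices preserves planarity) and bipartite (keep the element-vertices and the new pendants on one side, the triple-vertices on the other). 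The intended correspondence is: $\mathcal{T}'\subseteq\mathcal{T}$ is an exact cover of $W$ iff $D:=\mathcal{T}'\cup\{\,t':t\in\mathcal{T}\setminus\mathcal{T}'\,\}$ is an efficient dominating set of $G$.

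The heart of the argument is the structural claim that \emph{no element-vertex lies in any efficient dominating set $D$ of $G$}: if an element-vertex $e\in D$ and $t$ is a triple-vertex with $e\in t$ (one exists by the assumption above), then the pendant $t'$ must be dominated, and it cannot be that $t\in D$ (that would dominate $t$ twice, by $t$ and by $e$), so $t'\in D$; but then $t$ is dominated by both $e$ and $t'$, contradicting efficiency. Granting this, every element-vertex is dominated by exactly one incident triple-vertex of $D$, so $D\cap\mathcal{T}$ is an exact cover; the converse direction is a routine check that the explicit $D$ above dominates every element-vertex, triple-vertex, and pendant exactly once. I expect the main obstacle to be nailing this pendant argument cleanly and correctly invoking the (standard but essential) planarity-preserving \NP-hardness of \textsc{Planar 3-DM}. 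If one instead wants a self-contained proof reducing from \textsc{Planar 3-SAT}, the difficulty shifts to designing variable gadgets that propagate a truth value consistently to all (arbitrarily many) occurrences of a variable and clause gadgets enforcing ``exactly one true literal'', while keeping the construction simultaneously planar \emph{and} bipartite: controlling the parities of the connecting paths so that bipartiteness and the exact-domination counts hold together is then the crux.
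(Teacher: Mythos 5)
Your proposal is correct, and it is built on the same two ideas as the paper's proof: (i) for $k=1$ a solution is forced to hit every closed neighbourhood \emph{exactly} once (an efficient dominating set / perfect code), and (ii) attaching a pendant vertex to each ``selectable'' object gives it a private way to be dominated when it is not selected, so that selection becomes equivalent to an exactly-one covering condition. The only real difference is the source problem: the paper reduces from \textsc{Planar Positive 1-in-3 SAT} (pendants on variable vertices, clause vertices must be hit exactly once), whereas you reduce from \textsc{Planar X3C}/\textsc{Planar 3DM} (pendants on triple vertices, element vertices must be hit exactly once); these are mirror images of one another, with the degree-3 constraint sitting on the covered side in the paper and on the covering side in yours. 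Your version buys a slightly cleaner structural lemma (no element vertex can ever be in an efficient dominating set, which immediately forces the exact-cover reading), at the cost of having to invoke the planarity-preserving hardness of 3DM/X3C (Dyer--Frieze) rather than of 1-in-3 SAT; the paper's choice of source problem also lets it observe, essentially for free, that bounded-occurrence variants yield hardness on planar bipartite graphs of maximum degree $4$. Your forward and backward verifications are complete and correct, including the degenerate case of an uncoverable element, so there is no gap --- just make sure the citation for the planar restriction of your source problem is nailed down.
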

This shows that the MMDS problem for the parameter $k$ is \texttt{Para}-{\tt NP}-hard, even for planar bipartite graphs.
In other words, for every polynomial time computable function $f$, there is no $O(n^{f(k)})$-time algorithm for the MMDS problem. Further, our reduction also shows that the MMDS restricted to planar bipartite graphs does not have a $(2-\epsilon)$ approximation for any $\epsilon > 0$.\\

\noindent Having proved the $\NP$-Completeness property of \mmds, we study the problem parameterized by the pathwidth and treewidth of the input graph. (Please see Section~\ref{sec:prelim} for formal definitions of treewidth and pathwidth). We note that \textsc{Dominating Set} parameterized by the treewidth admits an algorithm running in time $3^{\sf tw} |V|^{\mathcal{O}(1)}$~\cite{CyganFKLMPPS15}. In contrast to the above, we show that such an algorithm cannot exist for \textsc{MMDS}.
\begin{theorem}
\label{thm:twhard}
\textsc{MMDS} is \rm W[1]-hard when parameterized by the pathwidth of the input graph. 
\end{theorem}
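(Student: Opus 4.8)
The plan is to reduce from the \mcc problem (equivalently, the \mcis problem), which is \textrm{W}[1]-hard parameterized by the number $k$ of colour classes. Given an instance --- a graph $H$ with colour classes $V_1 \uplus \dots \uplus V_k$, where after padding $|V_i| = n$ for every $i$ --- I would construct in polynomial time an \mmds instance $(G,\kappa)$ such that $G$ has a $\kappa$-membership dominating set if and only if $H$ has a multicoloured clique, while $\pw(G) \le f(k)$ for a fixed function $f$; the hardness then follows. The decisive feature is that the membership bound $\kappa$ is \emph{not} the parameter, so we may take $\kappa$ polynomially large in $n$ and encode the ``choice'' of a vertex of $V_i$ as a value in $\{1,\dots,n\} \subseteq \{0,\dots,\kappa\}$. (It is also worth keeping in mind that a $\kappa$-membership dominating set is exactly a $[\{0,\dots,\kappa-1\},\{1,\dots,\kappa\}]$-\textsc{Dominating Set}, which relates \mmds to the \srdsws family; but a dedicated reduction is cleaner here and is what gives the \emph{pathwidth} statement rather than merely treewidth.)

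The graph $G$ is organised along a single very long ``spine'', so that it is essentially a caterpillar-like structure. For each colour $i$ there is a \emph{register}: a vertex $c_i$ with $\kappa$ pendant leaves, whose \emph{value} $a_i$ is the number of its leaves placed in the solution. Dominating the leaves forces either $c_i$ into the solution (so $a_i \le \kappa - 1$) or all of its leaves into the solution (so $a_i = \kappa$), and the membership cap at $c_i$ handles the rest, so $a_i$ ranges over all of $\{0,\dots,\kappa\}$ while the register itself has pathwidth $1$. A handy building block is that hanging $\kappa + 1$ pendant leaves on a vertex forces that vertex into every $\kappa$-membership dominating set; iterated, this lets us pin certain vertices down and ``preload'' the membership of others up to $\kappa - 1$, which is how I realise ``at most one of these'' and ``exactly the right count'' constraints even though $\kappa$ is large. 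Since no vertex may be adjacent to all $\kappa$ leaves of a register without destroying the pathwidth bound, every access to a register's value must touch its leaves one at a time; accordingly, for each pair $i<j$ I build an \emph{edge-check gadget} occupying a stretch of the spine that scans (a fresh copy of) registers $i$ and $j$ leaf by leaf, internally encodes the set $E(V_i,V_j) \subseteq \{1,\dots,n\}^2$ as a list of per-edge sub-blocks, and admits a locally-dominating, within-budget configuration if and only if $(a_i,a_j)$ is the pair of endpoints of an edge of $H$. Short ``copy gadgets'' along the spine duplicate a register's value so that each edge-check gadget uses its own copy, and $\kappa$ is finally fixed so that the only binding membership constraints are the intended ones.

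For the pathwidth bound I would exhibit a path decomposition sweeping along the spine whose bags retain only (a) the centre and the single currently-active leaf of each of the $\mathcal{O}(k)$ register copies live at the current position, (b) the $\mathcal{O}(1)$ vertices of the sub-block of the one edge-check or copy gadget being processed, and (c) a constant number of interface and forcing vertices --- in all $\mathcal{O}(k)$ vertices per bag, so $\pw(G) = \mathcal{O}(k)$. For correctness, the forward direction takes a multicoloured clique $\{v_1^{a_1},\dots,v_k^{a_k}\}$, sets register $i$ (and its copies) to value $a_i$, satisfies each edge-check gadget through the sub-block of $\{v_i^{a_i}, v_j^{a_j}\}$, adds all forced vertices, and verifies that domination and the membership cap hold everywhere. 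The reverse direction reads a value $a_i \in \{0,\dots,\kappa\}$ off register $i$ from a given solution, uses the within-budget conditions on the gadgets to deduce that $1 \le a_i \le n$ and that $(a_i,a_j)$ is an edge of $H$ for all $i<j$, and concludes that $\{v_i^{a_i}\}_{i \in [k]}$ is a multicoloured clique.

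The crux --- and the step I expect to be the main obstacle --- is designing the register-scanning edge-check and copy gadgets: they must compare values in $\{1,\dots,n\}$ against $E(V_i,V_j)$ using only domination-plus-membership constraints, they must do so by touching each register's leaves one at a time so that no bag ever exceeds $\mathcal{O}(k)$ vertices, and they must be ``tight'', admitting no configuration that subverts the intended semantics. Balancing the need for tiny bags --- which forces a spatial encoding and confines all globally carried information to the registers --- against the need for domination together with the membership bound to capture the clique condition \emph{exactly}, and ruling out unintended dominating sets, is where essentially all of the technical effort lies; once the gadgets are pinned down, I expect the polynomial running time, the pathwidth computation, and both directions of correctness to be routine.
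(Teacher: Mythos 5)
There is a genuine gap: the entire reduction rests on the ``register-scanning edge-check gadget'' and the ``copy gadget,'' and neither is constructed --- you yourself identify their design as ``the main obstacle'' and ``where essentially all of the technical effort lies.'' For a W[1]-hardness proof this is not a routine detail to be filled in later; it is the proof. Several concrete issues stand in the way of the sketch as stated. First, the forcing argument for a register (either $c_i \in S$ or all $\kappa$ leaves are in $S$) requires the leaves to be pendant, i.e., of degree one; but a leaf of degree one cannot be ``touched'' by a scanning gadget, and as soon as you give it a second neighbour the domination analysis that defines the register's value changes (the leaf can now be dominated from outside, so the value $a_i$ is no longer pinned to the number of leaves in $S$). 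Second, the membership condition is only an upper bound $M(u,S) \le \kappa$, so any ``exactly this count'' or ``value equals an edge endpoint'' constraint needs a mechanism producing two complementary inequalities; preloading membership up to $\kappa-1$ gives you ``at most,'' but you have not exhibited the device that yields ``at least.'' Third, faithfully copying a unary counter between gadgets using only domination-plus-membership is itself a nontrivial gadget that is asserted rather than built.

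For comparison, the paper's construction avoids scanning and counters altogether. Each colour class $V_i$ and each edge set $E_{i,j}$ gets a \emph{block} containing one selection gadget per vertex (resp.\ per edge); a forcing structure guarantees that in every feasible solution exactly one gadget per block is ``activated'' (its $n$-element set $A$ lies entirely in $S$). The index of the activated gadget is then communicated through only four connector vertices per pair $i<j$: vertex $a_t$ of the gadget with index $\ell$ is joined to the connector $s$ for $t \le \ell$ and to the connector $r$ for $t \ge \ell$ (and in the mirrored way on the edge-block side), so that the membership of $s$ is at least $x_i + (n - x_i' + 1)$ and that of $r$ is at least $(n - x_i + 1) + x_i'$; the cap $n+1$ on both forces $x_i = x_i'$. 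This gives the two complementary inequalities for free, and the pathwidth bound is immediate: deleting the $4\binom{k}{2}$ connectors leaves disjoint blocks of constant pathwidth. Your plan could conceivably be completed, but as written the load-bearing gadgets are missing and the proposal does not constitute a proof.
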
 
We note that the pathwidth of a graph is at least as large as its treewidth, and thus the above theorem implies that MMDS parameterized by the treewidth does not admit any FPT algorithm. We prove Theorem~\ref{thm:twhard} by demonstrating an appropriate parameterized reduction from a well-known W[1]-hard problem called \textsc{Multi-Colored Clique} (see~\cite{FELLOWS200953} for its W[1]-hardness).

Next we study \textsc{MMDS} for split graphs, and prove the following theorem.

\begin{theorem}
\label{thm:splitreduction}
\mmds is \textsc{W}[1]-hard on split graphs when parameterized by $k$. 
\end{theorem}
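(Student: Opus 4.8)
The plan is to give an FPT reduction from the \mcc problem, which is W[1]-hard parameterized by the number $k$ of colour classes~\cite{FELLOWS200953}. After routine preprocessing we may assume $k \ge 2$, that each colour class $V_i$ ($1 \le i \le k$) is an independent set, and---by padding each $V_i$ with isolated dummy vertices, which cannot lie in any multicoloured clique once $k \ge 2$---that every $|V_i|$ is as large as is convenient, say $|V_i| \ge 2k$.

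From such an instance $(G,V_1,\dots,V_k)$ I would build a split graph $H$ whose clique side is $C=\{x_v : v\in V(G)\}$ and whose independent side $I$ consists of: a \emph{colour vertex} $s_i$, $1 \le i \le k$, with $N_H(s_i)=\{x_v : v\in V_i\}$; and a \emph{non-edge vertex} $t_{u,v}$ for every non-edge $\{u,v\}$ of $G$ with $u\in V_i$, $v\in V_j$, $i\ne j$, whose neighbourhood is the ``co-neighbourhood'' $N_H(t_{u,v})=\{x_w : w\in(V_i\setminus\{u\})\cup(V_j\setminus\{v\})\}$. The membership bound is $k':=k$. The intended solutions are precisely the sets $S=\{x_{v_1},\dots,x_{v_k}\}$ with $v_i\in V_i$: such an $S$ dominates all of the clique $C$, dominates each $s_i$ via $x_{v_i}$, and dominates $t_{u,v}$ (with $u\in V_i$, $v\in V_j$) as long as the chosen pair $(v_i,v_j)$ is \emph{not} the non-edge $(u,v)$---which holds exactly when $\{v_1,\dots,v_k\}$ is a clique. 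For such an $S$ every clique vertex has membership exactly $|S|=k$ and every $I$-vertex has membership at most $2\le k$, so a multicoloured clique yields a valid solution with no difficulty.

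For the converse, let $S$ be a dominating set of $H$ with membership at most $k$. The heart of the proof is a rigidity lemma: $S$ must consist of exactly one clique vertex per colour and nothing else. The tools are: (i) each $y\in S\cap C$ has $C\subseteq N_H[y]$, so $M(y,S)\ge|S\cap C|$, whence $|S\cap C|\le k$; (ii) if colour $i$ has no representative in $S\cap C$ then $s_i$---which has no $I$-neighbour---must itself lie in $S$, and then every $x_v$ with $v\in V_i$ acquires $s_i$ as an $I$-neighbour in $S$; and (iii) a non-edge vertex whose (exclusively clique) neighbourhood misses $S$ must itself lie in $S$, after which each of its roughly $2|V_i|$ clique neighbours gains a unit of membership. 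One then splits on the number of colour classes represented in $S\cap C$: if two or more are unrepresented, (iii) forces \emph{all} non-edge vertices between two such classes into $S$, and (thanks to the dummies) more than $k$ of them share a common clique neighbour, overloading it; if exactly one colour is unrepresented, a similar but finer count---a handful of forced non-edge vertices together with the chosen $s_i$ and the $\le k$ clique vertices---again overloads some clique vertex; and if every colour is represented, (i) forces $|S\cap C|=k$ with one $x_{v_i}$ per colour, and then (ii)--(iii) force $S\cap I=\emptyset$. Finally, with $S=\{x_{v_1},\dots,x_{v_k}\}$ and $S\cap I=\emptyset$, domination of $t_{u,v}$ forces $(v_i,v_j)\ne(u,v)$ for every cross-colour non-edge, i.e.\ $\{v_i,v_j\}\in E(G)$ for all $i\ne j$, so $\{v_1,\dots,v_k\}$ is a multicoloured clique.

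The main obstacle is exactly this rigidity lemma---excluding every ``cheating'' dominating set that does not come from a clean one-vertex-per-colour choice. The co-neighbourhood design of the non-edge vertices is what makes it go through: an undominated non-edge vertex, once forced into $S$, floods almost two entire colour classes with extra membership; and the isolated dummies are what guarantee that such forced vertices appear as soon as some colour class is under-represented. The delicate part is the sub-case bookkeeping---organised by how many colours are unrepresented and, within that, by the value of $|S\cap C|$---needed to certify that each deviation from the intended form genuinely pushes some membership above $k$.
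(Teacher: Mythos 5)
Your reduction is correct, and at its core it is the same construction as the paper's: a split graph whose clique consists of copies of $V(G)$ and whose independent side carries ``verification'' vertices attached to co-neighbourhoods, with membership bound $k$. The paper reduces from \mcis and introduces a vertex $x_{uv}$ for each \emph{edge} $uv$ (adjacent to $V_p\setminus\{u\}$ and $V_q\setminus\{v\}$), so that domination of $x_{uv}$ forbids picking both endpoints of an edge; you reduce from \mcc and introduce $t_{u,v}$ for each cross-colour \emph{non-edge}, so that domination forbids picking both endpoints of a non-edge. These are complementary encodings of the same idea. The substantive difference is how one colour representative per class is forced. The paper attaches a set $U_i$ of $k+1$ independent vertices completely to $V_i$: if $V_i\cap S=\emptyset$ then all of $U_i$ is forced into $S$ and every vertex of $V_i$ immediately has membership $k+1$, so the rigidity argument is local and needs no padding. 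You use a single colour vertex $s_i$, which only adds $+1$ to the membership of $V_i$ and therefore does not by itself yield a contradiction; you recover rigidity by padding each class with isolated dummies and running a case analysis on the number of unrepresented colours, letting the forced non-edge vertices flood some clique vertex. I checked your cases and they do close (with $|S\cap C|=k$ the vertex $s_i$ overloads $V_i$; with $|S\cap C|=k-1$ and one unrepresented class, the forced $t_{u,v_j}$ for a dummy $u$ overloads $V_1\setminus\{u\}$; with two or more unrepresented classes the dummy--dummy non-edge vertices overload a clique vertex), but the paper's $U_i$ gadget buys you a one-line proof of the same rigidity and dispenses with both the padding and the extra bookkeeping; your version buys nothing in exchange except avoiding the $k+1$ auxiliary vertices per class. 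The paper's additional vertex $w$ plays the role that your dummy padding plays for degenerate classes of size $1$.
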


We prove the above theorem by giving a parameterized reduction from \textsc{Multi-Colored Independent Set}, which is known to be W[1]-hard~\cite{FELLOWS200953}. Our reduction is inspired by the known parameterized reduction from \textsc{Multi-Colored Independent Set} to \textsc{Dominating Set}, where we carefully incorporate the membership constraint and remove the size constraint on the dominating set. We would like to note that \textsc{Dominating Set} is known to be W[2]-complete for split graphs~\cite{dom-split}.

Next we study \textsc{MMDS} parameterized by the vertex cover number of the input graph and show that it admits an FPT algorithm.

\begin{theorem}
\label{thm:vcFPT}
\textsc{MMDS} admits an algorithm running in time $2^{\mathcal{O}(\textbf{vc})} |V|^{\mathcal{O}(1)}$, where $\textbf{vc}$ is the size of a minimum-sized vertex cover of the input graph. 
\end{theorem}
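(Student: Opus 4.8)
The plan is to design a standard "branch on the vertex cover" FPT algorithm. Let $C$ be a minimum vertex cover of $G$ with $|C| = \textbf{vc}$, and let $I = V \setminus C$ be the corresponding independent set; $C$ can be computed in time $2^{\textbf{vc}} n^{\mathcal{O}(1)}$ (or we may assume it is given). The core observation is that a solution $S$ is essentially determined by its intersection with $C$: first guess $S_C = S \cap C$ by branching over all $2^{\textbf{vc}}$ subsets of $C$. Once $S_C$ is fixed, the vertices of $I$ partition into at most $2^{\textbf{vc}}$ \emph{types}, where two vertices of $I$ have the same type if they have the same neighborhood in $C$ (equivalently, the same neighborhood in $S_C$ together with the same neighborhood in $C \setminus S_C$ is enough — in fact the same neighborhood in $C$ suffices). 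For each type, the only remaining decisions are (i) how many vertices of that type to put into $S$, and (ii) which specific ones — but by symmetry within a type only the count matters for feasibility, so the whole problem reduces to choosing, for each of the $\le 2^{\textbf{vc}}$ types, an integer $n_t \in \{0, 1, \dots, |T_t|\}$ saying how many vertices of type $T_t$ go into $S \cap I$.

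**Next I would** set up the feasibility check for a fixed choice of $S_C$ and counts $(n_t)_t$. Domination: every vertex of $C$ not in $S_C$ must have a neighbor in $S_C$, or a neighbor in $I$ that is selected — i.e., some type adjacent to it has $n_t \ge 1$; every vertex of $I$ must be in $S$ or have a neighbor in $S_C$ (its neighbors in $I$ are none, since $I$ is independent), so any type $T_t$ with $N(T_t) \cap S_C = \emptyset$ forces $n_t = |T_t|$. Membership: for $v \in C$, $M(v,S) = |N[v] \cap S_C| + \sum_{t : T_t \sim v} n_t \le k$; for $v \in I$ of type $T_t$, $M(v,S) = |N(v)\cap S_C| + [v \in S] \le |N(v)\cap S_C| + 1$, which is automatically $\le k$ unless $k$ is tiny, and in any case only depends on the type and on whether $v$ itself is chosen — so within a type we should prefer not to select a vertex if $|N(v)\cap S_C| = k$ (and if that forces a domination failure the branch is infeasible).

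**Then** the remaining task is: do there exist counts $n_t \in \{0,\dots,|T_t|\}$ (with the forced equalities above) satisfying, for each $v \in C \setminus S_C$ the lower bound $\sum_{t \sim v} n_t \ge 1$ when $N[v]\cap S_C=\emptyset$, and for each $v \in C$ the upper bound $|N[v]\cap S_C| + \sum_{t \sim v} n_t \le k$? This is a feasibility question over at most $2^{\textbf{vc}}$ integer variables with at most $\textbf{vc}$ "packing" constraints of a very structured form (each constraint sums a subset of the variables). One clean way: for each $v \in C$ with $|N[v] \cap S_C| \le k$ and with $N[v]\cap S_C \ne \emptyset$ or some adjacent type present, greedily we want each $n_t$ as large as possible to help domination but as small as possible to respect membership caps; since making $n_t$ larger only ever helps domination and only ever hurts membership, the feasible region is a "box" cut by upper-bound-on-sums constraints, so one can test feasibility by: set each $n_t$ to its forced value or to the maximum allowed by all membership constraints it participates in, then check all domination lower bounds. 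Alternatively, and most robustly, encode the whole thing as an integer linear program in $2^{\textcal{O}(\textbf{vc})}$ variables and $\textbf{vc}^{\mathcal{O}(1)}$ constraints and invoke Lenstra's algorithm / the Frank–Tardos bound, giving time $2^{\mathcal{O}(\textbf{vc}\log \textbf{vc})}$; to shave the log factor down to the claimed $2^{\mathcal{O}(\textbf{vc})}$ one uses the structural greedy argument above instead of a generic ILP solver.

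**The main obstacle** I anticipate is not the branching but getting the running time to be exactly $2^{\mathcal{O}(\textbf{vc})}$ rather than $2^{\mathcal{O}(\textbf{vc}\log\textbf{vc})}$: a naive approach would branch on $S_C$ ($2^{\textbf{vc}}$ choices) and then, for each type, branch on the count $n_t$ (up to $n$ choices each, across $2^{\textbf{vc}}$ types), which blows up. The fix is the observation that the count-selection subproblem has a matroid/greedy structure — increasing any $n_t$ is monotone good for domination and monotone bad for membership — so it needs no search: set forced counts, then for the free types compute the tightest upper bound imposed by the membership constraints of adjacent cover vertices, set $n_t$ to that bound (capped by $|T_t|$), and verify the domination lower bounds; a branch succeeds iff this canonical assignment is feasible. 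This makes the per-branch work polynomial, yielding the claimed $2^{\mathcal{O}(\textbf{vc})} n^{\mathcal{O}(1)}$ bound. One must also handle the mild interaction that a membership cap of $k$ on a cover vertex might be violated already by $S_C$ alone (prune such branches) and that choosing which individual vertices within a type go into $S$ can matter when $k$ is small enough that some members of a type are "saturated" — this is handled by always preferring unsaturated members and only declaring infeasibility if domination cannot then be met.
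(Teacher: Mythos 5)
Your overall skeleton matches the paper's: compute a minimum vertex cover $C$, branch over the $2^{|C|}$ choices of $S\cap C$, observe that the independent-set vertices with no selected neighbor in $C$ are forced into the solution, and reduce the remaining choice (which vertices of $I$ to add) to an integer feasibility problem with $\mathcal{O}(\textbf{vc})$ constraints. The gap is in how you solve that residual problem. Your greedy --- ``set each $n_t$ to the tightest upper bound imposed by the membership constraints of adjacent cover vertices, then verify the domination lower bounds'' --- is unsound, because the membership constraints are packing constraints on \emph{sums} that couple several types through a common neighbor in $C$, so there is no well-defined coordinatewise-maximal feasible point. Concretely, take $k=3$, $S\cap C=\{w\}$ with $w$ adjacent to all of $I$, two further cover vertices $v_1,v_2$ not dominated by $w$, and two types $T_a\sim\{w,v_1\}$ and $T_b\sim\{w,v_2\}$ of size at least $2$ each: the constraint at $w$ is $n_a+n_b\le 2$, each per-variable bound is $2$, so your canonical assignment $n_a=n_b=2$ gives $w$ membership $5$, yet $n_a=n_b=1$ is feasible. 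Monotonicity of each constraint in each variable does not make a joint covering-plus-packing feasibility question decidable by independent per-variable maximization. Your fallback is also off: Lenstra's algorithm is FPT in the \emph{number of variables}, and you have $2^{\mathcal{O}(\textbf{vc})}$ type variables, so it yields a doubly exponential bound rather than $2^{\mathcal{O}(\textbf{vc}\log\textbf{vc})}$.

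The missing ingredient --- and what the paper uses --- is an FPT algorithm for ILP parameterized by the \emph{number of constraints} (Theorem~\ref{thm:ILPfptconsraints}, due to Dvořák et al.). With that tool your type compression becomes unnecessary: one $0/1$ variable per eligible vertex of $I$ (excluding the forced vertices $I_1$ and those already having $k$ neighbors in $S\cap C$), together with at most two constraints per vertex of $C$ (a domination lower bound and a membership upper bound) and no constraints among vertices of $I$, already gives an ILP with $\mathcal{O}(\textbf{vc})$ constraints that is solved in FPT time per branch; summing over the $2^{\textbf{vc}}$ branches yields the claimed bound. The rest of your write-up (the forced set $I_1$, pruning branches where $S\cap C$ alone violates a membership cap, excluding saturated independent vertices) agrees with the paper's Lemma~\ref{lem:kmdsvc} and the surrounding discussion.
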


We prove the above theorem by exhibiting an algorithm which is obtained by ``guessing'' the portion of the vertex cover that belongs to the solution, and for the remainder of the portion, solving an appropriately created instance of \textsc{Integer Linear Programming}. 

To complement our Theorem~\ref{thm:vcFPT}, we obtain a matching algorithmic lower bound as follows. 


\begin{theorem}
\label{thm:vclowerbound}
Assuming ETH, \textsc{MMDS} does not admit an algorithm running in time $2^{o({\textbf{vc}})} |V|^{\mathcal{O}(1)}$, where $\textbf{vc}$ is the size of a minimum-sized vertex cover of the input graph.
\end{theorem}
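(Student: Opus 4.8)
The plan is to give a polynomial-time reduction from \textsc{3-SAT} in which the vertex cover number of the output graph is linear in the number of variables. By the Sparsification Lemma we may assume the input $3$-CNF formula $\varphi$ has $n$ variables and $m=\mathcal{O}(n)$ clauses, and under ETH it cannot be decided in time $2^{o(n+m)}=2^{o(n)}$. I will construct in polynomial time a graph $G$ and a constant membership bound $k$ (one can take $k=3$) such that $G$ has a $k$-mds if and only if $\varphi$ is satisfiable, while $\textbf{vc}(G)=\mathcal{O}(n+m)=\mathcal{O}(n)$. A hypothetical $2^{o(\textbf{vc})}|V|^{\mathcal{O}(1)}$-time algorithm for \mmds would then decide $\varphi$ in $2^{o(n)}n^{\mathcal{O}(1)}$ time, contradicting ETH, which is exactly the statement of the theorem.

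The ``core'' $W$ of $G$, which will be a vertex cover, consists of literal vertices $t_i,f_i$ (one pair per variable $x_i$, standing for $x_i$ and $\lnot x_i$), clause vertices $c_j$ (one per clause), and two fixed tiny sets of anchor vertices $B$ with $|B|=k-1$ and $B^\star$ with $|B^\star|=k$. Each anchor vertex is given $k+1$ private pendants; this pins it into every $k$-mds, since otherwise all its pendants would have to be selected and its membership would become $k+1$. Each literal vertex $t_i$ (resp.\ $f_i$) gets a private pendant $t_i'$ (resp.\ $f_i'$), so it is dominated even when not selected. For each variable we add a vertex $a_i$ adjacent exactly to $t_i$ and $f_i$, and a vertex $w_i$ adjacent to $t_i$, $f_i$ and all of $B$: since $B$ dominates $w_i$, the vertex $w_i$ imposes no new domination requirement, but if both $t_i$ and $f_i$ were selected its membership would be $(k-1)+2=k+1$, so at most one literal of each variable may be selected. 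For each clause $C_j$ we make $c_j$ adjacent to the three literal vertices of $C_j$ and to a private vertex $w_j^\star$, and we make $w_j^\star$ adjacent to $c_j$ and all of $B^\star$; then $B^\star$ dominates $w_j^\star$, but if $c_j$ were selected the membership of $w_j^\star$ would be $k+1$, so no clause vertex is ever in the solution, while $w_j^\star$ and $B^\star$ themselves are never in the solution and are not adjacent to $c_j$, hence do not dominate it. Every non-core vertex is a pendant or one of the $a_i,w_i,w_j^\star$, and is adjacent only to $W$, so $W$ is a vertex cover and $|W|=2n+m+\mathcal{O}(1)=\mathcal{O}(n)$.

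Correctness is the usual two directions. If $\varphi$ has a satisfying assignment, let $S$ contain the literal vertex of each variable matching its truth value, the pendant of the opposite literal vertex, and all of $B\cup B^\star$; a short case check shows every vertex is dominated and every closed neighbourhood meets $S$ in at most $3$ vertices, the tight cases being the $w_i$ (one literal vertex plus the $k-1$ vertices of $B$), the $w_j^\star$ (the $k$ vertices of $B^\star$), and the clause vertices (at most three true literals). Conversely, given a $k$-mds $S$: every anchor vertex lies in $S$; every $c_j$ and every $w_j^\star$ lies outside $S$; and at most one of $t_i,f_i$ lies in $S$. Reading off $x_i$ according to whether $t_i$ or $f_i$ is in $S$ (and arbitrarily if neither) gives a well-defined assignment, and because $c_j$ is dominated while $c_j,w_j^\star\notin S$, some literal vertex of $C_j$ is in $S$, i.e.\ $C_j$ has a true literal; hence the assignment satisfies $\varphi$.

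The real work is in the construction, not the analysis: a single global budget $k$ must be loose enough that an honest satisfying assignment yields a valid $k$-mds --- which works here only because the clause vertices are kept out of $S$, so the possibly many clause-incidences of a literal vertex never contribute to its membership --- and at the same time tight enough to forbid selecting both literals of a variable and to forbid a clause vertex from dominating itself. The two devices that make this balance possible are the ``$k+1$ pendants'' trick (to force an anchor vertex into every solution) and the private saturated vertex $w$ equipped with exactly $k$ or $k-1$ anchor neighbours (to cap a neighbourhood). The one genuinely delicate point is routing the clause-vertex cap through $w_j^\star$ in such a way that it outlaws $c_j\in S$ without $w_j^\star$ or $B^\star$ ever dominating $c_j$: if that failed, clauses would be satisfied for free and the backward direction would collapse. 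It then only remains to observe, as above, that $W$ --- and hence a vertex cover of $G$ --- has size $\mathcal{O}(n)$.
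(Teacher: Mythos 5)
Your proposal is correct and follows essentially the same route as the paper: a polynomial-time reduction from 3-SAT using the ``$k+1$ pendants'' trick to force anchor vertices into every solution and saturated guard vertices (your $w_i$ and $w_j^\star$, the paper's per-variable $b_i^j$'s and the global hub $Y$) to cap memberships, so that clause vertices can only be dominated by literal vertices. The one notable difference is that the paper's single hub $Y$ keeps all clause vertices out of the vertex cover and thus gets $\textbf{vc}=\mathcal{O}(n)$ without sparsification, whereas your per-clause guards $w_j^\star$ force one cover vertex per clause, so the appeal to the Sparsification Lemma that you include is genuinely needed in your variant.
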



\noindent
{\bf Related works.} Kuhn \textit{et al.}~\cite{KuhnRWWZ05} introduced the ``membership'' variant, in a spirit similar to what we have, for the \textsc{Set Cover} problem, called \textsc{Minimum Membership Set Cover} (\textsc{MMSC}, for short). For the above problem, they obtained several results, including $\NP$-completeness, an $\mathcal{O}(\ln n)$ approximation algorithm, and a matching approximation hardness result. 
A special case of the \textsc{MMSC} problem is studied in~\cite{DomGNW06} where the collection of sets have \emph{consecutive ones property}. 
In such a set system, the problem is shown to be polynomial-time solvable.  
Narayanaswamy \textit{et al.}~\cite{DhannyaNR18} and recently, Mitchell and Pandit~\cite{MitchellPandit19} have studied the dual of the \textsc{MMSC} problem which is the \emph{Minimum Membership Hitting Set} (\textsc{MMHS}) problem in various geometric settings.


The problem \textsc{Perfect Code} is a variant of \textsc{Dominating Set} where (in addition to the size constraint) we require the membership of each vertex in the dominating set to be exactly one. \textsc{Perfect Code} is another well-studied variant of \textsc{Dominating Set}, see for instance~\cite{Biggs73,FellowsH91,HuangXZ18,Kratochvil86,Kratochvil87,KratochvilK88,Mollard11,CESATI2002}. Telle~\cite{Telle94J,Telle94T} studied a variant of \textsc{Dominating Set} where two vectors $\sigma,\rho$ are additionally given as input, and the membership of vertices in the dominating set and outside this set needs to be determined by $\sigma$ and $\rho$, respectively.
They obtained several results with respect to parameterized complexity of the above variant of \textsc{Dominating Set}. Also, Chapelle~\cite{Chapelle10} studied the above variant with respect to treewidth as the parameter and gave an algorithm running in time $k^{\tw} |V|^{\mathcal{O}(1)}$, where $\tw$ is the treewidth of the input graph. \textsc{MMDS} with membership constraint $k$, is the same as \srdsws, when $\sigma = [0,k-1]$ and $\rho=[1,k]$, thus the problem also admits such an algorithm.

 

Chellali \textit{et al.}~\cite{ChellaliHHM13} introduced a version called \jlds, where we seek a dominating set where the membership of each vertex is at least $j$ and at most $\ell$. They studied the above problem for the viewpoint of combinatorial bounds on special graph classes like claw-free graphs, $P_4$-free graphs, and caterpillars, for restricted values of $j$ and $\ell$. Recently Meybodi \textit{et al.}~\cite{MeybodiFMP20} studied the problems \rjds and $[1,j]$-\textsc{Total Dominating Set} in the realm of parameterized complexity. Though these problems involve constrained membership, unlike MMDS, they require a membership constraint only on the open neighborhood of vertices. 

\section{Preliminaries}
\label{sec:prelim}
We recall in this section some notations and definitions used throughout this article. 
For any two positive integers $x$ and $y$, by $[x,y]$ we mean the set $\{x,x+1,\ldots,y\}$, and by $[x]$ we mean $[1,x]$. 
We assume that all our graphs are simple and undirected. 
Given a graph $G = (V,E)$, $n$ represents the number of vertices, and $m$ represents the number of edges. 
We denote an edge between any two vertices $u$ and $v$ by $uv$. For a subset $S \subseteq V$, by $G[S]$ we mean the subgraph of $G$ induced by $S$, and by $G-S$ we mean $G[V\setminus S]$. 
For every vertex $u \in V$, by $N(u)$ we mean open neighborhood of $u$, and by $N[u]$ we mean closed neighborhood of $u$. 
Similarly, for any set $S \subseteq V$, $N(S) = \bigcup_{u \in S}N(u) \setminus S$ and $N[S] = \bigcup_{u\in S} N[u]$. 
Other than this, we follow the standard graph-theoretic notations based on Diestel~\cite{Diestel12}. 
We refer to the recent books of Cygan \textit{et al.}~\cite{CyganFKLMPPS15} and Downey and Fellows~\cite{DowneyFellows13} for detailed introductions to parameterized complexity.


\noindent {\bf Treewidth and pathwidth.} For an undirected graph $G = (V,E)$, a {\em tree decomposition} of $G$ is a pair $(\T, \X)$, where $\T$ is a tree and $\X = \{\Xi \subseteq V \mid \bli \in V(\T)\}$ such that 
\begin{itemize}
\item $\bigcup_{\bli \in V(\T)} \Xi = V$,
\item for each edge $uv \in E$, there exists a node $\bli \in V(\T)$ such that $u,v \in \Xi$, and
\item for each $u \in V$, the set of nodes $\{\bli \in V(\T)\mid u \in \Xi\}$ induces a connected subtree in $\T$. 
\end{itemize}
The {\em width} of a tree decomposition $(\T, \X)$ is $\max_{\bli \in V(\T)} (|\Xi|-1)$. 
The {\em treewidth} of $G$ is the minimum width over all possible tree decompositions of $G$. 
A tree decomposition $(\T, \X)$ is said to be a {\em path decomposition} if $\T$ is a path. 
The {\em pathwidth} of a graph $G$ is the minimum width over all possible path decompositions of $G$. 
Let $\pw(G)$ and $\tw(G)$ denote the pathwidth and treewidth of the graph $G$, respectively. 
The pathwidth of a graph $G$ is one less than the minimum clique number of an interval supergraph $H$ which contains  $G$ as an induced subgraph. It is well-known that the maximal cliques of an interval graph can be linearly ordered such that for each vertex, the maximal cliques containing it occur consecutively in the linear order.  This gives a path decomposition of the interval graph.  
A path decomposition of the graph $G$ is the  path decomposition of the interval supergraph $H$ which contains $G$ as an induced subgraph. In our proofs we start with the path decomposition of an interval graph and then reason about the path decomposition of graphs that are constructed from it.  


\section{The MMDS problem on planar bipartite graphs is $\NP$-complete}
\label{sec:npcompleteness}
We show that the MMDS problem is $\NP$-hard  for $k=1$ even when restricted to planar bipartite graphs. The $\NP$-hardness is proved by a reduction from {\sc Planar Positive 1-in-3 SAT} as follows. Let $\phi$ be a boolean formula with no negative literals on $n$ variables $X = \{x_1, x_2,\dots , x_n\}$ having $m$ clauses $C = \{C_1, C_2,\dots , C_m\}$. Further we consider the restricted case when the graph encoding the variable-clause incidence is planar.
Such a boolean formula is naturally associated with a planar bipartite graph $G_\phi = (C \cup X, E)$ where $X = \{x_1, x_2,\dots , x_n\}$, $C = \{C_1, C_2,\dots , C_m\}$  and $E = \{(x_i, C_j ) ~|~\text{variable } x_i$  appears in the clause $C_j\}$.\\

\begin{tcolorbox}

\textbf{PP1in3SAT (Planar Positive 1-in-3 SAT)}\\
\textbf{Input :} A boolean formula $\phi(X)$ without negative literals and  that $G_\phi$ is planar\\
\textbf{Decide:} Does there exist an assignment of values $a_1,a_2,\dots, a_n$ to the variables $x_1, x_2,\dots , x_n$ such that exactly one variable in each clause is set to true?
\end{tcolorbox}

\noindent
It is known that  PP1in3SAT is $\NP$-complete~\cite{MulzerR08}. 
A reduction from  PP1in3SAT to the MMDS problem is shown to prove that the MMDS problem is $\NP$-hard. 
\begin{proof} [Proof of Theorem \ref{thm:planarhard} ] 
Given a set $S$, we can check the feasibility of the set $S$ to the instance $(G, k)$ of the MMDS problem in polynomial time. 
Therefore, the MMDS problem is in $\NP$. 
To prove that the MMDS problem is \NP-hard, we present a reduction from  PP1in3SAT.
\noindent Let $\phi$ be a positive 3-CNF formula such that $G_\phi$ is planar. Now, construct a bipartite graph $\hat G_\phi$ as follows. For each vertex $x_i \in G_\phi$, add an additional vertex $\hat x_i$ and connect this vertex to the corresponding $x_i$ using an edge. Let $\hat X = \{\hat x_1,\hat x_2,\dots,\hat x_n\}$. The resultant graph $\hat G_\phi = \big(((X\cup \hat X)\cup C), E\cup\{(x_i,\hat x_i), 1\leq i\leq n\}\big)$ is also a planar graph. We show that $\phi $ is  satisfiable if and only if $\hat G_\phi$ has a dominating set which hits the closed neighborhood of each vertex exactly once. 
Given a set $S$, we can check the feasibility of the set $S$ to the instance $(G, k)$ of the MMDS problem in polynomial time. 
Therefore, the MMDS problem is in $\NP$. 
To prove that the MMDS problem is \NP-hard, we present a reduction from  PP1in3SAT.
\noindent Let $\phi$ be a positive 3-CNF formula such that $G_\phi$ is planar. Now, construct a bipartite graph $\hat G_\phi$ as follows. For each vertex $x_i \in G_\phi$, add an additional vertex $\hat x_i$ and connect this vertex to the corresponding $x_i$ using an edge. Let $\hat X = \{\hat x_1,\hat x_2,\dots,\hat x_n\}$. The resultant graph $\hat G_\phi = \big(((X\cup \hat X)\cup C), E\cup\{(x_i,\hat x_i), 1\leq i\leq n\}\big)$ is also a planar graph. We show that $\phi $ is  satisfiable if and only if $\hat G_\phi$ has a dominating set which hits the closed neighborhood of each vertex exactly once. \\

\noindent
Let   $A = \{a_1,a_2,\dots a_n\}$ be a satisfying assignment for $\phi$, such that exactly one variable is set to true in each clause. For the graph $\hat G_\phi$ , construct a set $S \subseteq V$ such as $S = \big\{\{ x_i|a_i = 1, ~a_i\in ~A\}\cup\{\hat x_i~|~a_i=0, ~a_i\in ~A \}, 1\leq i \leq n\big\}$. Clearly, $S$ is a dominating set for $\hat G_\phi$. 
Consider a clause vertex $c\in C$. Let the three variable vertices adjacent to $c$ be $x, y $ and $z$, out of which only one will be assigned value 1 by the satisfying assignment. Without loss of generality, let this be $y$. The vertex $y$ will dominate $c$ and $\hat y$, and vertices $x$ and $z$ will be dominated by $\hat x$ and $\hat z$ respectively.  Therefore, the $S$ is an MMDS of $\hat G_\phi$ for membership parameter $k=1$.

\noindent  To prove the reverse direction, let $S$ be a dominating set for $\hat G_\phi$ such that the closed neighborhood of each vertex is intersected exactly once by $S$. Since the membership of each vertex in $S$ is 1, it follows that, for each clause $c \in C$, $c \notin S$, and exactly one neighbor of $c$ is in $S$.  Therefore, for each $i$, either $x_i \in S \text{ or } \hat x_i \in S$ 
Consider the truth assignment $A$ for $\phi$ as follows: For each $x_i \in S$, assign $a_i = 1$, and for each $x_i \notin S$, assign $a_i = 0$. As there is exactly one vertex $x \in S$ from every $c\in C$, only one literal from every  clause will be satisfied by $A$, and thus $A$ is a satisfying assignment for $\phi$. 
Hence, the MMDS problem is \NP-complete  for $k=1$ even on planar bipartite graphs.
\end{proof}
\noindent{\bf Remark:}  The reduction also shows that the MMDS problem does not have a polynomial time $(2-\epsilon)$ approximation algorithm unless $\P=\NP$.  This is because such an algorithm can solve the MMDS problem for $k=1$.  Also, we believe that starting with the hardness of planar 3-SAT variants in which each variable occurs exactly 3 times \cite{MulzerR08}, our reduction shows that the MMDS problem on planar bipartite graphs of maximum degree 4 is \NP-complete.

\section{W[1]-hardness with respect to pathwidth}
\label{sec:pathwidthhard}
{We prove Theorem~\ref{thm:twhard} by a reduction from the \mcc problem to the MMDS problem.}  
It is well-known that the \mcc problem is  W[1]-hard for the parameter solution size~\cite{DowneyFellows99}.\\
\begin{tcolorbox}
\mcc\\
\textsf{Input}: A positive integer $k$ and a $k$-colored graph $G$.\\
{\sf Parameter}: $k$\\
{\sf Question}: Does there exists a clique of size $k$ with one vertex from each color class?
\end{tcolorbox}
\noindent
Let $(G=(V,E),k)$ be an instance of the \mcc problem. 
Let $V=(V_1,\ldots, V_k)$ denote the partition of the vertex set $V$. By a partition, we mean the set of all vertices of same color. 
We assume, without loss of generality, $|V_i| = n$ for each $i \in [k]$. We usually use $n$ to denote number of vertices in the input graph. However, we use $n$ here to denote the number of vertices in each color class.
For each $1 \leq i \leq k$, let $V_i = \{u_{i,\ell} \mid 1 \leq \ell \leq n\}$. 
\subsection{Gadget based reduction from \mcc}
For an input instance $(G, k)$ of the \mcc~problem, the reduction outputs an instance $(H, k')$ of the MMDS problem where $k' = n+1$. 
The graph $H$ is constructed using 
two types of gadgets, $\D$ and $I$ (illustrated in Figure~\ref{fig:structures}).
The gadget $I$ is the primary gadget and the gadget $\D$ is secondary gadget that is used to construct the gadget $I$. \\
\noindent {\bf Gadget of  type $\D$.} 
For two vertices $u$ and $v$, the gadget $\D_{u,v}$ is an interval graph consisting of vertices $u$, $v$ and $n+2$ additional vertices that form an independent set. 
The vertices $u$ and $v$ are  adjacent, and both $u$ and $v$ are adjacent to every other vertex. 
We refer to the vertices $u$ and $v$ as {\em heads} of the gadget $\D_{u,v}$. 
Intuitively, for any feasible solution $S$, and for any gadget $\D_{u,v}$, either $u$ or $v$ should be in $S$. Otherwise, remaining $n+2$ vertices must be in $S$ which contradicts the optimality of $S$ because membership for both $u$ and $v$ is at least $n+2$. 
\begin{observation}
\label{obs:DPathwidth}
The pathwidth of the gadget $\D$ is two. Indeed, it is an interval graph with maximum clique of size three and thus, by definition, has pathwidth~2.
\end{observation}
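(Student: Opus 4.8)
The plan is to establish the two inequalities $\pw(\D) \le 2$ and $\pw(\D) \ge 2$ separately. Recall first the structure of $\D_{u,v}$: the $n+2$ non-head vertices form an independent set $\{w_1,\dots,w_{n+2}\}$, each $w_j$ is adjacent to both heads $u$ and $v$, and $uv$ is an edge; hence the maximal cliques of $\D_{u,v}$ are exactly the triangles $\{u,v,w_j\}$ and $\omega(\D_{u,v}) = 3$.

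For the upper bound I would exhibit a concrete interval representation: assign to $u$ and to $v$ a common long interval, and to each middle vertex $w_j$ a short interval, choosing the $n+2$ short intervals pairwise disjoint and all contained in the long one. This exhibits $\D_{u,v}$ as an interval graph (in particular, as its own interval supergraph), and its largest clique has size $3$. By the characterization of pathwidth recalled in Section~\ref{sec:prelim}, taking $H = \D_{u,v}$ gives $\pw(\D_{u,v}) \le \omega(\D_{u,v}) - 1 = 2$. Equivalently, the same data yields the width-$2$ path decomposition whose bags are $B_j = \{u,v,w_j\}$ for $j \in [n+2]$, placed consecutively along a path; verifying the three tree-decomposition axioms is immediate, the only point worth a line being that each $w_j$ occurs in exactly one bag while $u$ and $v$ occur in all of them, so every vertex's bags form a (connected) subpath.

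For the lower bound, note that every interval graph $H$ that contains $\D_{u,v}$ as an induced subgraph contains the triangle $\{u,v,w_1\}$, so $\omega(H) \ge 3$; hence $\pw(\D_{u,v}) = \min_{H}(\omega(H)-1) \ge 2$. (Alternatively, one can use $\pw \ge \tw \ge \omega - 1 \ge 2$, or recall that graphs of pathwidth at most $1$ are disjoint unions of caterpillars and in particular triangle-free, which $\D_{u,v}$ is not.) Combining the two bounds gives $\pw(\D_{u,v}) = 2$. There is no genuine obstacle here: the statement is essentially immediate once one observes that $\D_{u,v}$ is itself an interval graph with clique number $3$, and the only minor care needed is the connectivity check in the explicit path decomposition and recalling why the clique number lower-bounds the pathwidth by one.
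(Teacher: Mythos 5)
Your proof is correct and follows essentially the same route as the paper, which simply invokes the interval-supergraph characterization of pathwidth from Section~\ref{sec:prelim} after noting that $\D_{u,v}$ is itself an interval graph with clique number $3$. You merely spell out the details the paper leaves implicit (the explicit interval model, the bags $\{u,v,w_j\}$, and the triangle-based lower bound), all of which check out.
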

\noindent {\bf Gadget of type $I$.} Let $n \geq 1$ be an integer.
The gadget has two vertices $h_1$ and $h_2$, and two disjoint sets: $A = \{a_1, \ldots, a_n\}$ and $D = \{d_1, \ldots, d_n\}$. 
For each $i\in [n]$, vertices $a_i$ and $d_i$ are connected by the gadget $\D_{a_i, d_i}$. 
Let $h_2$ and $h_1$ be two additional vertices which are adjacent. 
The vertices in the sets $A$ and $D$ are  adjacent to $h_2$ and $h_1$, respectively. 
For each $1 \leq i \leq n$, $a_i$ and $h_1$ are connected by the gadget $\D_{a_i,h_1}$, and
 $d_i$ and $h_2$ are connected by the gadget $\D_{d_i,h_2}$.
In the reduction a gadget of type $I$ is denoted by the symbol $I$ and an appropriate subscript.  
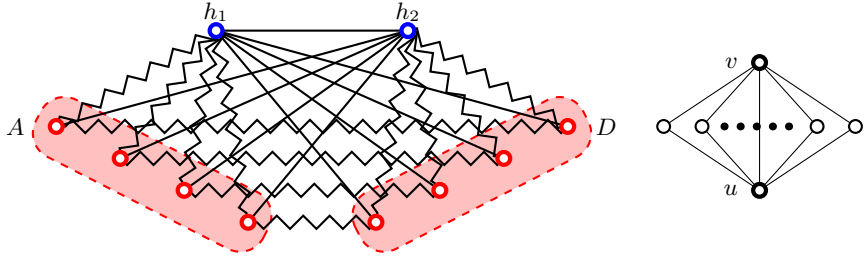
\begin{figure}[h]
\center
\begin{tikzpicture}[scale=0.85]
\coordinate (a1) at (1,4.5);
\coordinate (a2) at (2,4);
\coordinate (a3) at (3,3.5);
\coordinate (a4) at (4,3);

\coordinate (b1) at (1,-0.5);
\coordinate (b2) at (2,0);
\coordinate (b3) at (3,0.5);
\coordinate (b4) at (4,1);

\coordinate (d1) at (6,3);
\coordinate (d2) at (7,3.5);
\coordinate (d3) at (8,4);
\coordinate (d4) at (9,4.5);

\coordinate (c1) at (6,1);
\coordinate (c2) at (7,0.5);
\coordinate (c3) at (8,0);
\coordinate (c4) at (9,-0.5);

\coordinate (e) at (-1,2);
\coordinate (e1) at (-1,0);
\coordinate (f1) at (3.5,-2);
\coordinate (f2) at (6.5,-2);
\coordinate (g) at (11,2);
\coordinate (g1) at (11,0);
\coordinate (h1) at (3.5,6);
\coordinate (h2) at (6.5,6);

\draw[red, fill=red, fill opacity=0.25, dashed, thick,rounded corners=10pt, rotate around={153:(2.5,3.75))}] (0.5, 4.25) rectangle (4.5,3.25);
\draw[red, fill=red, fill opacity=0.25, dashed, thick,rounded corners=10pt, rotate around={27:(7.5,3.75))}] (5.5, 4.25) rectangle (9.5,3.25);


\draw[black, thick] (h1) -- (d1);
\draw[black, thick] (h1) -- (d2);
\draw[black, thick] (h1) -- (d3);
\draw[black, thick] (h1) -- (d4);
\draw[black, thick, decorate, decoration=\zigzag] (h1) -- (a1);
\draw[black, thick, decorate, decoration=\zigzag] (h1) -- (a2);
\draw[black, thick, decorate, decoration=\zigzag] (h1) -- (a3);
\draw[black, thick, decorate, decoration=\zigzag] (h1) -- (a4);
\draw[black, thick, decorate, decoration=\zigzag] (h2) -- (d1);
\draw[black, thick, decorate, decoration=\zigzag] (h2) -- (d2);
\draw[black, thick, decorate, decoration=\zigzag] (h2) -- (d3);
\draw[black, thick, decorate, decoration=\zigzag] (h2) -- (d4);
\draw[black, thick] (h2) -- (a1);
\draw[black, thick] (h2) -- (a2);
\draw[black, thick] (h2) -- (a3);
\draw[black, thick] (h2) -- (a4);
\draw[black, thick] (h2) -- (h1);


\draw[black, thick, decorate, decoration=\zigzag] (d4) -- (a1);
\draw[black, thick, decorate, decoration=\zigzag] (d3) -- (a2);
\draw[black, thick, decorate, decoration=\zigzag] (d2) -- (a3);
\draw[black, thick, decorate, decoration=\zigzag] (d1) -- (a4);

%
%

\draw[red, fill=white, ultra thick] (a1) circle (0.1cm) node[left=0.3cm, black, thick] {$A$};
\draw[red, fill=white, ultra thick] (a2) circle (0.1cm);
\draw[red, fill=white, ultra thick] (a3) circle (0.1cm);
\draw[red, fill=white, ultra thick] (a4) circle (0.1cm);
\draw[red, fill=white, ultra thick] (d1) circle (0.1cm);
\draw[red, fill=white, ultra thick] (d2) circle (0.1cm);
\draw[red, fill=white, ultra thick] (d3) circle (0.1cm);
\draw[red, fill=white, ultra thick] (d4) circle (0.1cm) node[right=0.25cm, black, thick] {$D$};

\draw[blue, fill=white, ultra thick] (h1) circle (0.1cm) node[above, black, thick] {$h_1$};
\draw[blue, fill=white, ultra thick] (h2) circle (0.1cm) node[above, black, thick] {$h_2$};

\makeatletter
\tikzset{
    dot diameter/.store in=\dot@diameter,
    dot diameter=3pt,
    dot spacing/.store in=\dot@spacing,
    dot spacing=10pt,
    dots/.style={
        line width=\dot@diameter,
        line cap=round,
        dash pattern=on 0pt off \dot@spacing
    }
}
\makeatother

\coordinate (a1) at (12,3.5);
\coordinate (a2) at (13.5,4.5);
\coordinate (a3) at (12,5.5);
\coordinate (a4) at (10.5,4.5);

\draw[black] (a1) -- (a2);
\draw[black] (a1) -- (a4);
\draw[black] (a3) -- (a2);
\draw[black] (a3) -- (a4);
\draw[black] (a1) -- ($(a2) + (-0.6,0)$);
\draw[black] (a1) -- ($(a4) + (0.6,0)$);
\draw[black] (a3) -- ($(a2) + (-0.6,0)$);
\draw[black] (a3) -- ($(a4) + (0.6,0)$);
\draw[black] (a1) -- (a3);

\draw[black, fill=white, ultra thick] (a1) circle (0.1cm) node[left=0.15cm] {$u$};
\draw[black, fill=white, thick] (a2) circle (0.1cm);
\draw[black, fill=white, ultra thick] (a3) circle (0.1cm) node[left=0.15cm] {$v$};
\draw[black, fill=white, thick] (a4) circle (0.1cm);
\draw[black, fill=white, thick] ($(a2) + (-0.6,0)$) circle (0.1cm);
\draw[black, fill=white, thick] ($(a4) + (0.6,0)$) circle (0.1cm);

\draw [black, dot diameter=3pt, dot spacing=6pt, dots] ($(a4) + (0.95, 0)$) -- ($(a2) + (-0.9, 0)$);
\end{tikzpicture}
\caption{To the left is the type-$I$  gadget for $n=4$ and to the right is the type-$D$ gadget. The zigzag edges between vertices $u$ and $v$ represent the gadget $\D_{u,v}$. }
\label{fig:structures}
\end{figure}
\begin{claimnew}
\label{claim:gagdetPathwidth}
The pathwidth of a gadget type $I$ is at most four. 
\end{claimnew}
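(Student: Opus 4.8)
The plan is to exhibit an explicit path decomposition of the type-$I$ gadget in which every bag has size at most five. The key observation is that among the ``structural'' vertices $h_1,h_2,a_1,\dots,a_n,d_1,\dots,d_n$, only $h_1$ and $h_2$ touch edges that span the whole gadget, whereas each filler vertex introduced by a $\D$-gadget (the $n+2$ independent vertices of $\D_{a_i,d_i}$, of $\D_{a_i,h_1}$, and of $\D_{d_i,h_2}$) has exactly two neighbours, namely the two heads of that $\D$-gadget. So I would keep $h_1$ and $h_2$ in every bag, keep $a_i$ and $d_i$ in the bags during the ``phase'' that processes index $i$, and bring in the filler vertices one at a time.

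Concretely, first I would take the bag $B_0=\{h_1,h_2\}$. Then, for $i=1,\dots,n$ in turn, I would append a block of bags: start with $\{h_1,h_2,a_i,d_i\}$, and then, for each of the $3(n+2)$ filler vertices $f$ belonging to $\D_{a_i,d_i}$, $\D_{a_i,h_1}$ or $\D_{d_i,h_2}$, append the bag $\{h_1,h_2,a_i,d_i,f\}$; after exhausting the fillers of index $i$ I would append $\{h_1,h_2\}$ again and proceed to $i+1$. This is a linear sequence of bags (hence a path decomposition), and every bag has at most five vertices, so the width is at most four.

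Then I would verify the three axioms. Vertex coverage is immediate. For edge coverage: the edge $h_1h_2$ lies in every bag; the edges $a_id_i$, $a_ih_1$, $a_ih_2$, $d_ih_1$, $d_ih_2$ (all present because of the relevant $\D$-gadget or the direct adjacencies of $A,D$ to $h_2,h_1$) lie in the bag $\{h_1,h_2,a_i,d_i\}$; and each filler edge $fu$, with $u$ a head of $f$'s $\D$-gadget and hence $u\in\{h_1,h_2,a_i,d_i\}$, lies in the unique bag containing $f$. For the connectedness axiom: $h_1$ and $h_2$ occur in all bags; $a_i$ and $d_i$ occur exactly in the consecutive bags of the index-$i$ block; and each filler vertex occurs in exactly one bag.

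The argument has no real obstacle; the only point requiring a little care is to keep both $a_i$ and $d_i$ present throughout the \emph{entire} index-$i$ block (rather than dropping $d_i$ once $\D_{a_i,d_i}$ has been handled), so that the filler vertices of $\D_{d_i,h_2}$, which need $d_i$ as a neighbour, do not force $d_i$ to reappear in non-consecutive bags and thereby violate connectedness.
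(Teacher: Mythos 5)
Your construction is correct and is essentially the paper's argument made explicit: the paper removes $h_1$ and $h_2$, notes the remainder is a disjoint union of interval components of pathwidth~2, and adds $h_1,h_2$ back into every bag; your index-$i$ blocks with $\{h_1,h_2,a_i,d_i,f\}$ are precisely such a width-2 decomposition augmented by $\{h_1,h_2\}$. The point you flag about keeping $d_i$ alive through the whole block is the right detail to check, and your decomposition handles it.
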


\begin{claimproof}
We observe that the removal of the vertices $h_1$ and $h_2$ results in a graph in which for each $i \in [n]$,   there is a connected component consisting $a_i$ and $d_i$ which are the heads of a gadget of type $\D$ and they are both adjacent to $n+2$ vertices of degree 1. Each component is an interval graph with a triangle as the maximum clique and from Observation~\ref{obs:DPathwidth}  is of pathwidth 2.
Let $(\T', \X')$ be the path decomposition of $I-\{h_1,h_2\}$ with width two. 
Thus adding $h_1$ and $h_2$ into all the bags of the path decomposition $(\T',\X')$ gives a path decomposition for the gadget $I$, and thus the pathwidth of the gadget $I$ is at most 4.
 \end{claimproof}

\noindent
In the following parts, when we refer to a gadget we mean the primary gadget $I$ unless the gadget $\D$ is specified. 
For each vertex and edge in the given graph, our reduction has a corresponding gadget in the instance output by the reduction. \\
{\bf Description of the reduction}.
For $1 \leq i < j \leq k$, let $E_{i,j}$ denote the set of edges with one end point in $V_i$ and the other in $V_j$, that is $E_{i,j} = \{xy \mid x \in V_i, y\in V_j\}$. \\
For each vertex and edge in $G$, the reduction uses a gadget of type $I$. 
For each $1 \leq i < j \leq k$, the graph $H$ has an induced subgraph $H_i$ corresponding to  $V_i$, and has an induced subgraph $H_{i,j}$ for the edge set $E_{i,j}$.  We refer to $H_i$ as a vertex-partition block and $H_{i,j}$ as an edge-partition block.  Inside  block $H_i$, there is a gadget of type $I$ for each vertex in $V_i$, and
in the block $H_{i,j}$ is a gadget for each edge in $E_{i,j}$.  For a vertex $u_{i,x}$,  $I_x$ denotes the gadget corresponding to $u_{i,x}$ in the partition $V_i$, and for an edge $e$, $I_e$ denotes the gadget corresponding to $e$. 
Finally, the blocks are connected by the connector vertices which we describe below. We next define the structure of a block which we denote by $B$. The definition of the block applies to both the vertex-partition block and the edge-partition block.

\noindent A block $B$ 
consists of the following gadgets, additional vertices, and edges.
\begin{itemize}
\item The block $B$ corresponding to the vertex-partition block  $H_i$ for any $i \in [k]$ is  as follows: for each $\ell \in [n]$, add a gadget $I_\ell$  to the vertex-partition block $H_i$, to represent the vertex $u_{i,\ell} \in V_i$. 
\item The block $B$ corresponding to the  edge-partition gadget $H_{i,j}$ for any $1 \leq i < j \leq k$ is  as follows: for each $e \in E_{i,j}$, add a gadget $I_e$ in the edge-partition block $H_{i,j}$, to  represent the edge $e$. 
\item In addition to the gadgets, we add $(n+1)(n+3)+2$ vertices to the block $B$  as follows (See Figure~\ref{fig:block} in appendix): 
Let $C(B)$ denote the set $\{f,f',\ch_1,\ch_2,\ldots,$ $\ch_{n+1},b_1,b_2,\ldots,$ $b_{(n+1)(n+2)}\}$, which is the set  of additional vertices that are added to the block $B$. 
Let $C'(B)$ denote the subset  $\{\ch_1,\ch_2,\ldots,\ch_{n+1}\}$. 
For each gadget $I$ in $B$, and for each $t \in [n]$, $a_t$ in $I$ is adjacent to $f$, and the vertex $f$ is  adjacent to $f'$.
Further, the vertex $f'$ is  adjacent to each vertex $\ch_p$ for $p \in [n+1]$. 
Finally, for each $p \in [n+1]$ and $(p-1)(n+2) < q \leq p(n+2)$, $\ch_p$ is  adjacent to $b_q$. 
\end{itemize}

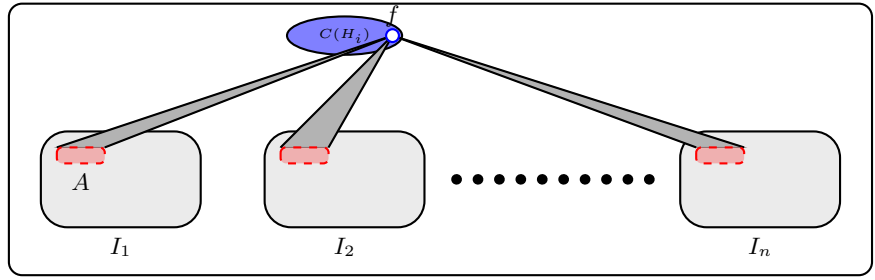
\begin{figure}[h]
\center
\begin{tikzpicture}[scale=0.85]
\makeatletter
\tikzset{
    dot diameter/.store in=\dot@diameter,
    dot diameter=3pt,
    dot spacing/.store in=\dot@spacing,
    dot spacing=10pt,
    dots/.style={
        line width=\dot@diameter,
        line cap=round,
        dash pattern=on 0pt off \dot@spacing
    }
}
\makeatother
\coordinate (x1) at (0,0);
\coordinate (r11) at ($(x1) + (0,2)$);
\coordinate (r21) at ($(x1) + (2.5,0.5)$);
\coordinate (a11) at ($(x1) + (0.25,1.75)$);
\coordinate (a21) at ($(x1) + (1,1.50)$);
\coordinate (x2) at ($(x1) + (3.5,0)$);
\coordinate (r12) at ($(x2) + (0,2)$);
\coordinate (r22) at ($(x2) + (2.5,0.5)$);
\coordinate (a12) at ($(x2) + (0.25,1.75)$);
\coordinate (a22) at ($(x2) + (1,1.50)$);
\coordinate (x3) at ($(x1) + (10,0)$);
\coordinate (r13) at ($(x3) + (0,2)$);
\coordinate (r23) at ($(x3) + (2.5,0.5)$);
\coordinate (a13) at ($(x3) + (0.25,1.75)$);
\coordinate (a23) at ($(x3) + (1,1.50)$);
\coordinate (a31) at ($(a21) + (0,0.25)$);
\coordinate (a32) at ($(a22) + (0,0.25)$);
\coordinate (a33) at ($(a23) + (0,0.25)$);
\coordinate (a) at (5.5,3.5);

\draw[black, thick, fill=blue!50] ($(a) + (-0.75,0)$) ellipse (0.9cm and 0.3cm) node[black] {\tiny{$C(H_i)$}};

\draw[black, thick, fill=lightgray, fill opacity=0.3, rounded corners=10pt] (r11) rectangle (r21);
\draw[black, thick, fill=lightgray, fill opacity=0.3, rounded corners=10pt] (r12) rectangle (r22);
\draw[black, thick, fill=lightgray, fill opacity=0.3, rounded corners=10pt] (r13) rectangle (r23);

\filldraw[draw=gray, fill=gray!60] (a11) -- (a31) -- (a) -- (a11) -- cycle;
\filldraw[draw=gray, fill=gray!60] (a12) -- (a32) -- (a) -- (a12) -- cycle;
\filldraw[draw=gray, fill=gray!60] (a13) -- (a33) -- (a) -- (a13) -- cycle;

\draw[black, thick] (a11) -- (a);
\draw[black, thick] (a31) -- (a);
\draw[black, thick] (a12) -- (a);
\draw[black, thick] (a32) -- (a);
\draw[black, thick] (a13) -- (a);
\draw[black, thick] (a33) -- (a);


\draw[red, dashed, thick, fill=red, fill opacity=0.25, rounded corners=2pt] (a11) rectangle (a21);
\draw[red, dashed, thick, fill=red, fill opacity=0.25, rounded corners=2pt] (a12) rectangle (a22);
\draw[red, dashed, thick, fill=red, fill opacity=0.25, rounded corners=2pt] (a13) rectangle (a23);
\draw[blue, thick, fill=white] (a) circle (0.1cm) node[black, thick, above] {$f$};
\draw[black, thick, rounded corners=5pt] ($(r11)+(-0.5,2)$) rectangle ($(r23)+(0.5,-0.75)$);
\draw [black, dot diameter=4pt, dot spacing=8pt, dots] ($(x2) + (3, 1.25)$) -- ($(x3) + (-.2, 1.25)$);
\node[black, below, ultra thick] at (0.625,1.5) {$A$};
\node[black, below, ultra thick] at ($(r21) + (-1.25,0)$) {$I_1$};
\node[black, below, ultra thick] at ($(r22) + (-1.25,0)$) {$I_2$};
\node[black, below, ultra thick] at ($(r23) + (-1.25,0)$) {$I_n$};

\end{tikzpicture}
\caption{Illustration of a vertex block $H_i$ for some $i \in [k]$. 
An edge block $H_{i,j}$ for some $1 \leq i < j \leq k$ will have $|E_{i,j}|$-many internal gadgets. 
}
\label{fig:block}
\end{figure}

\noindent Next, we introduce the connector vertices to connect the edge-partition blocks and vertex-partition blocks. 
Let $R = \{r_{i,j}^i, s_{i,j}^i, r_{i,j}^j, s_{i,j}^j \mid 1 \leq i < j \leq k\}$ be the connector vertices. 
The blocks are connected based on the following exclusive and exhaustive cases, and is illustrated in Figure~\ref{fig:connector}:\\
For each $i \in [k]$, each $i < j \leq k$ and each $\ell \in [n]$, the edges are described below. 
\begin{itemize}
 \item for each $1 \leq t \leq \ell$, the vertex $a_t$ in the gadget $I_\ell$ of $H_i$ is  adjacent to the vertex $s_{i,j}^i$
 \item for each $\ell \leq t \leq n$, the vertex $a_t$ in the gadget $I_\ell$ of $H_i$ is  adjacent to the vertex $r_{i,j}^i$
 \end{itemize}
For each $i \in [k]$, each $1 \leq j < i$ and each $\ell \in [n]$, 
\begin{itemize}
\item for each $1 \leq t \leq \ell$, the vertex $a_t$ in the gadget $I_\ell$ of $H_i$ is  adjacent to the vertex $s_{j,i}^i$
\item for each $\ell \leq t \leq n$, the vertex $a_t$ in the gadget $I_\ell$ of $H_i$ is  adjacent to the vertex $r_{j,i}^i$
\end{itemize}
\noindent
For each $1 \leq i < j \leq k$, and for each $e = u_{i,x}u_{j,y} \in E_{i,j}$, 
\begin{itemize}
\item for each $1 \leq t \leq x$, the vertex $a_t$ in the gadget $I_e$ of $H_{i,j}$ is  adjacent to the vertex $r_{i,j}^i$
\item for each $x \leq t \leq n$, the vertex $a_t$ in the gadget $I_e$ of $H_{i,j}$ is  adjacent to the vertex $s_{i,j}^i$
\item for each $1 \leq t \leq y$, the vertex $a_t$ in the gadget $I_e$ of $H_{i,j}$ is  adjacent to the vertex $r_{i,j}^j$
\item for each $y \leq t \leq n$, the vertex $a_t$ in the gadget $I_e$ of $H_{i,j}$ is  adjacent to the vertex $s_{i,j}^j$
\end{itemize}
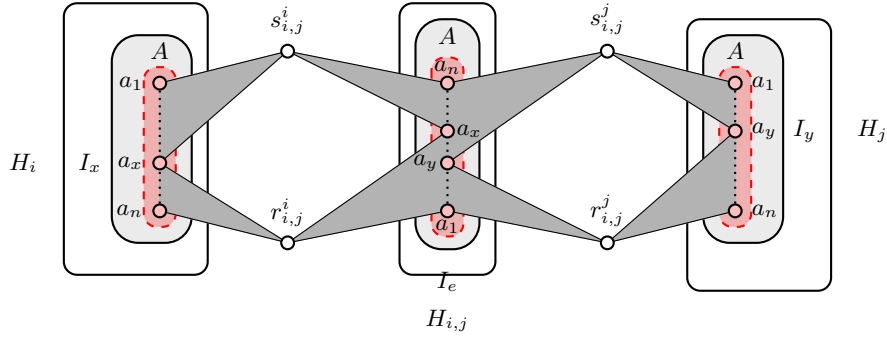
\begin{figure}[h]
\centering
\begin{tikzpicture}[scale=0.85]

\coordinate (a1) at (2,3.5);
\coordinate (a2) at (2,2.25);
\coordinate (a3) at (2,1.5);
\coordinate (b1) at (11,3.5);
\coordinate (b2) at (11,2.75);
\coordinate (b3) at (11,1.5);
\coordinate (c1) at (6.5,3.5);
\coordinate (cx) at (6.5,2.75);
\coordinate (cy) at (6.5,2.25);
\coordinate (c3) at (6.5,1.5);
\coordinate (si) at (4,4);
\coordinate (ri) at (4,1);
\coordinate (sj) at (9,4);
\coordinate (rj) at (9,1);

\draw[thick, black, fill=white, rounded corners=5pt ] ($(a1) + (-1.5, 1.25)$) rectangle ($(a3)+(0.75,-1)$);
\draw[black, thick, fill=lightgray, fill opacity=0.3, rounded corners=10pt] ($(a1)+(-0.75,+0.75)$) rectangle ($(a3)+(0.5,-0.5)$);
\draw[red, fill=red, fill opacity=0.25, dashed, thick,rounded corners=5pt ] ($(a1)+(-0.25,+0.25)$) rectangle ($(a3)+(0.25,-0.25)$);
\filldraw[draw=gray!60, fill=gray!60] (a1) -- (si) -- (a2) -- (a1) -- cycle;
\filldraw[draw=gray!60, fill=gray!60] (a3) -- (ri) -- (a2) -- (a3) -- cycle;
\draw[black] (a1) -- (si);
\draw[black] (a2) -- (si);
\draw[black] (a2) -- (ri);
\draw[black] (a3) -- (ri);
\draw[black, thick, dotted] (a1) -- (a3);
\draw[thick, black, fill=red!25] (a1) circle (0.1cm) node[left=0.1cm] {$a_1$} node[above=0.2cm] {$A$};
\draw[thick, black, fill=red!25] (a2) circle (0.1cm) node[left=0.1cm] {$a_{x}$} node[left=0.65cm] {$I_x$} node[left=1.5cm] {$H_i$};
\draw[thick, black, fill=red!25] (a3) circle (0.1cm) node[left=0.1cm] {$a_n$};

\draw[thick, black, fill=white, rounded corners=5pt ] ($(b1) + (-0.75, 1)$) rectangle ($(b3)+(1.5,-1.25)$);
\draw[black, thick, fill=lightgray, fill opacity=0.3, rounded corners=10pt] ($(b1)+(-0.5,+0.75)$) rectangle ($(b3)+(0.75,-0.5)$);
\draw[red, fill=red, fill opacity=0.25, dashed, thick,rounded corners=5pt ] ($(b1)+(-0.25,+0.25)$) rectangle ($(b3)+(0.25,-0.25)$);
\filldraw[draw=gray!60, fill=gray!60] (b1) -- (sj) -- (b2) -- (b1) -- cycle;
\filldraw[draw=gray!60, fill=gray!60] (b3) -- (rj) -- (b2) -- (b3) -- cycle;
\draw[black] (b1) -- (sj);
\draw[black] (b2) -- (sj);
\draw[black] (b2) -- (rj);
\draw[black] (b3) -- (rj);
\draw[black,thick, dotted] (b1) -- (b3);
\draw[thick, black, fill=red!25] (b1) circle (0.1cm) node[right=0.1cm] {$a_1$} node[above=0.2cm] {$A$};
\draw[thick, black, fill=red!25] (b2) circle (0.1cm) node[right=0.1cm] {$a_{y}$} node[right=0.65cm] {$I_y$} node[right=1.5cm] {$H_{j}$};
\draw[thick, black, fill=red!25] (b3) circle (0.1cm) node[right=0.1cm] {$a_n$};

\draw[thick, black, fill=white, rounded corners=5pt ] ($(c1) + (-0.75, 1.25)$) rectangle ($(c3)+(0.75,-1)$);
\draw[black, thick, fill=lightgray, fill opacity=0.3, rounded corners=10pt] ($(c1)+(-0.5,1)$) rectangle ($(c3)+(0.5,-0.6)$);
\draw[red, fill=red, fill opacity=0.25, dashed, thick,rounded corners=5pt ] ($(c1)+(-0.25,+0.4)$) rectangle ($(c3)+(0.25,-0.4)$);
\filldraw[draw=gray!60, fill=gray!60] (c1) -- (si) -- (cx) -- (c1) -- cycle;
\filldraw[draw=gray!60, fill=gray!60] (c1) -- (sj) -- (cy) -- (c1) -- cycle;
\filldraw[draw=gray!60, fill=gray!60] (c3) -- (ri) -- (cx) -- (c3) -- cycle;
\filldraw[draw=gray!60, fill=gray!60] (c3) -- (rj) -- (cy) -- (c3) -- cycle;
\draw[black] (c1) -- (si);
\draw[black] (c1) -- (sj);
\draw[black] (cx) -- (ri);
\draw[black] (cx) -- (si);
\draw[black] (cy) -- (rj);
\draw[black] (cy) -- (sj);
\draw[black] (c3) -- (ri);
\draw[black] (c3) -- (rj);
\draw[thick, black, dotted] (c1) -- (c3);
\draw[thick, black, fill=red!25] (c1) circle (0.1cm) node[above] {$a_n$} node[above=0.4cm] {$A$};
\draw[thick, black, fill=red!25] (cx) circle (0.1cm) node[right] {$a_{x}$} node[below=1.75cm] {$I_e$} node[below=2.25cm] {$H_{i,j}$};
\draw[thick, black, fill=red!25] (cy) circle (0.1cm) node[left] {$a_{y}$};
\draw[thick, black, fill=red!25] (c3) circle (0.1cm) node[below] {$a_1$};

\draw[thick, black, fill=white] (si) circle (0.1cm) node[above=0.1cm] {$s^i_{i,j}$};
\draw[thick, black, fill=white] (ri) circle (0.1cm) node[above=0.1cm] {$r^i_{i,j}$};
\draw[thick, black, fill=white] (sj) circle (0.1cm) node[above=0.1cm] {$s^j_{i,j}$};
\draw[thick, black, fill=white] (rj) circle (0.1cm) node[above=0.1cm] {$r^j_{i,j}$};
\end{tikzpicture}
\caption{
An illustration of the connector vertices $s_{i,j}^i$, $r_{i,j}^i$, $s_{i,j}^j$ and $r_{i,j}^j$ connect the blocks $H_i$ and $H_{i,j}$, and, $H_{j}$ and $H_{i,j}$ for some $1 \leq i < j \leq k$. The edge $e$ represented in the gadget $I_e$ is $u_{i,x}u_{j,y} \in E_{i,j}$. }
\label{fig:connector}
\end{figure}
This completes construction of the graph $H$ with ${\cal O}(mn^2)$ vertices and ${\cal O}(mn^3)$ edges. We next bound the pathwidth of the graph $H$ as a polynomial function of $k$. 
\begin{claimnew}
\label{claim:BlockPathwidth}
The pathwidth of a block $B$ is at most five. 
\end{claimnew}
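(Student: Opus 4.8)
The plan is to exploit the fact that, inside a block $B$, the single vertex $f$ is the only ``meeting point'' of the different type-$I$ gadgets and of the auxiliary vertex set $C(B)$. First I would examine $B-f$. By the description of the reduction, the only block-level neighbour of a gadget vertex $a_t$ is $f$ itself, while the vertices $h_1,h_2$ and the vertices of the internal $\D$-gadgets have no neighbours outside their own type-$I$ gadget; moreover, among the vertices of $C(B)$, the vertex $f'$ is adjacent only to $f$ and to $c_1,\dots,c_{n+1}$, and each $b_q$ is adjacent only to its $c_p$. Hence $B-f$ is exactly the disjoint union of the type-$I$ gadgets occurring in $B$ together with the graph $T$ induced on $C(B)\setminus\{f\}=\{f',c_1,\dots,c_{n+1},b_1,\dots,b_{(n+1)(n+2)}\}$.

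Next I would bound the pathwidth of each component. Every type-$I$ gadget has pathwidth at most $4$ by Claim~\ref{claim:gagdetPathwidth}. The graph $T$ is a tree ($f'$ joined to every $c_p$, and each $c_p$ joined to its $n+2$ pendant vertices $b_q$, the ranges of $q$ partitioning $[(n+1)(n+2)]$); a path decomposition of $T$ of width $2$ is obtained by listing the bags $\{f',c_p,b_q\}$ in lexicographic order of $(p,q)$, and one checks directly that every edge lies in some bag and that each vertex occupies a contiguous interval of bags. Since the pathwidth of a disjoint union is the maximum of the pathwidths of its parts, concatenating these decompositions yields a path decomposition of $B-f$ of width at most $\max\{4,2\}=4$.

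Finally I would insert $f$ into every bag of that decomposition. This raises the width by at most one, keeps the (trivially contiguous) occurrence of $f$ and the occurrences of all other vertices intact, and makes every bag contain $f$, so that the edges $\{f,a_t\}$ and $\{f,f'\}$ — the only edges of $B$ not yet covered — become covered; this gives a valid path decomposition of $B$ of width at most $5$. The one point requiring care is the first step: verifying against the lengthy construction that $f$ really is a cut vertex of $B$, i.e.\ that no gadget vertex other than the $a_t$'s, and no vertex of $C(B)$ other than $f'$, has a neighbour reaching out of its own gadget or out of $C(B)$ inside the block. Everything afterwards is the standard bookkeeping of concatenating path decompositions and inflating each bag by a single vertex.
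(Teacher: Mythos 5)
Your proof is correct and follows essentially the same route as the paper: delete $f$, observe that the remainder is a disjoint union of type-$I$ gadgets (pathwidth at most $4$ by Claim~\ref{claim:gagdetPathwidth}) and a tree rooted at $f'$ (pathwidth $2$), and then add $f$ to every bag of a width-$4$ path decomposition of $B-\{f\}$. Your explicit verification that $f$ is the unique attachment point of the gadgets and of $C(B)$ within the block, and your explicit path decomposition of the tree, are just more careful spellings-out of the steps the paper leaves implicit.
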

\begin{claimproof}
If we remove the vertex $f$ from the block $B$, then the resulting graph is a disjoint collection of gadgets and a tree of height two. 
We know that the pathwidth of a gadget is four from Claim~\ref{claim:gagdetPathwidth}, and the pathwidth of a tree of height two is two. 
Let $(\T',\X')$ be a path decomposition of $B-\{f\}$ with pathwidth four. 
Thus adding $f$ into all bags of $(\T', \X')$ gives a path decomposition for the block $B$, and thus the pathwidth of the block is at most five. 
\end{claimproof}

\begin{lemma}
\label{lem:HPathwidth}
The pathwidth of the graph $H$ is at most $4{k \choose 2} + 5$. 
\end{lemma}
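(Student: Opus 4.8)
The plan is to exhibit $R$, the set of all connector vertices, as a small separator whose deletion breaks $H$ into its constituent blocks, and then to combine the per-block bound of Claim~\ref{claim:BlockPathwidth} with the standard fact that adding a fixed set of vertices to every bag raises the width by exactly the size of that set.

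First I would verify that $H - R$ is precisely the disjoint union of the $k$ vertex-partition blocks $H_1,\dots,H_k$ and the ${k \choose 2}$ edge-partition blocks $\{H_{i,j}\mid 1\le i<j\le k\}$. This is a direct reading of the construction: inside any block $B$ every vertex is connected to the vertex $f$ (each gadget's set $A$ is joined to $f$, $f$ is joined to $f'$, $f'$ dominates $C'(B)$, and every $b_q$ hangs off some $\ch_p$), so each block is connected; and the only edges of $H$ with an endpoint lying outside a single block are exactly the edges incident to a vertex of $R$ listed in the ``$a_t$ in the gadget $I_\ell$ \dots is adjacent to the vertex $\cdot$'' clauses. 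Hence after deleting $R$ the blocks are pairwise non-adjacent, and no block-internal vertices are identified or otherwise linked across blocks. Note also $|R| = 4{k \choose 2}$.

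Next, since pathwidth is preserved under disjoint union (concatenate the individual path decompositions into one path), Claim~\ref{claim:BlockPathwidth} gives $\pw(H-R)\le 5$. Fix a path decomposition $(\T',\X')$ of $H-R$ of width at most $5$, i.e.\ with every bag of size at most $6$, and obtain $(\T',\X'')$ by inserting all $4{k \choose 2}$ vertices of $R$ into every bag. Then $(\T',\X'')$ is a valid path decomposition of $H$: each original block vertex still occupies a contiguous (nonempty) set of bags, each vertex of $R$ now occupies all bags, and every edge of $H$ is covered — an edge inside a block was already covered by $(\T',\X')$, and an edge with an endpoint in $R$ is covered by any bag containing its other endpoint (if the other endpoint is a block vertex) or by any bag at all (if both endpoints lie in $R$). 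The largest bag of $(\T',\X'')$ has at most $6 + 4{k \choose 2}$ vertices, so $\pw(H)\le 4{k \choose 2}+5$.

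I expect the only step needing genuine care to be the first one: confirming against the (somewhat intricate) description of the reduction that two distinct blocks share no edge and that the connector vertices are the sole ``glue'' between blocks, so that $H-R$ really is a disjoint union of blocks. Once that structural claim is in hand, the rest — the disjoint-union property of pathwidth and the ``add a set to every bag'' arithmetic — is routine.
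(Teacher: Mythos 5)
Your proposal is correct and follows essentially the same route as the paper: delete the $4{k \choose 2}$ connector vertices to obtain a disjoint union of blocks of pathwidth at most $5$ (Claim~\ref{claim:BlockPathwidth}), then add all of $R$ to every bag of the resulting path decomposition. The only difference is that you spell out the structural verification that $R$ is the sole glue between blocks and the disjoint-union property of pathwidth, which the paper leaves implicit.
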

\begin{proof}
Removal of the connector vertices from $H$ results in a collection of disjoint blocks. 
From Claim~\ref{claim:BlockPathwidth}, the pathwidth of a block is five. 
Let $(\T',\X')$ be a path decomposition of $H-R$ with pathwidth five. 
Therefore, adding all connector vertices to the path decomposition $(\T', \X')$ gives a path decomposition for the graph $H$ with pathwidth at most $4{k \choose 2}+5$.  
\end{proof}

\subsubsection{Properties of a feasible solution for the MMDS instance $(H, k')$.}
Let $S$ be a feasible solution for the MMDS instance  $(H, k')$. We state the following properties of the set $S$. In all the arguments below, we crucially use the property that for each $u \in V(H)$, $M(u,S) \leq n+1$.

\begin{claimnew}
\label{claim:structT}
For each block $B$ in the graph $H$, $C'(B) \subseteq S$. 
\end{claimnew}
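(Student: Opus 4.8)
The plan is to prove Claim~\ref{claim:structT} by a short counting argument that exploits the pendant vertices attached to each $\ch_p$. Recall from the construction of a block $B$ that, for every $p\in[n+1]$, the vertex $\ch_p$ is adjacent to exactly the $n+2$ vertices $b_q$ with $(p-1)(n+2)<q\le p(n+2)$, and that each such $b_q$ is a leaf whose unique neighbour in $H$ is $\ch_p$: the $b_q$'s occur nowhere else in the description of a block, and the connector edges introduced afterwards are incident only to the $a_t$-vertices of the gadgets and to the vertices of $R$. So the first step is simply to record this local picture: the $n+1$ sets of private leaves of the vertices $\ch_1,\dots,\ch_{n+1}$ are pairwise disjoint and each has size $n+2$, and together they exhaust $\{b_1,\dots,b_{(n+1)(n+2)}\}$.

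Next I would argue by contradiction. Fix $p\in[n+1]$ and suppose $\ch_p\notin S$. Since $S$ is a dominating set and $N[b_q]=\{b_q,\ch_p\}$ for every leaf $b_q$ hanging off $\ch_p$, each of these $n+2$ leaves must itself lie in $S$. All of them belong to $N[\ch_p]$, so $M(\ch_p,S)=|N[\ch_p]\cap S|\ge n+2>n+1=k'$, contradicting the feasibility of $S$ for the instance $(H,k')$ (we use here the standing property that $M(u,S)\le n+1$ for all $u\in V(H)$). Hence $\ch_p\in S$. As $p$ was arbitrary, $C'(B)=\{\ch_1,\dots,\ch_{n+1}\}\subseteq S$, and since $B$ was an arbitrary block of $H$, the claim follows for every block.

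I do not expect a genuine obstacle here; the only point that needs care is verifying against the full construction that no $b_q$ acquires an extra incident edge (in particular none of the connector edges touches a $b_q$), so that the $b_q$'s really are degree-one and private to a single $\ch_p$. Once that is checked, the crux is the arithmetic observation that the number of these private leaves is $n+2$, exactly one more than the membership budget $k'=n+1$; this ``$+2$'' is precisely what the gadget for $C(B)$ was engineered to provide, forcing every feasible solution to contain all of $C'(B)$, a fact that will then be used in the subsequent claims to determine the rest of $S$ inside each block.
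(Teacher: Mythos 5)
Your proof is correct and is essentially the paper's own argument: since each $\ch_p$ has $n+2$ private degree-one neighbours $b_q$, excluding $\ch_p$ from $S$ forces all of them into $S$ to dominate themselves, pushing $M(\ch_p,S)$ to at least $n+2>k'=n+1$. The extra care you take to verify that no connector edge touches a $b_q$ is a sensible sanity check but adds nothing beyond the paper's one-line justification.
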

\begin{claimproof}
By construction of graph $H$, for each $1 \leq p \leq n+1$, the vertex $\ch_p$ must be in the set $S$ since it has $n+2$ vertices of degree one as neighbors.  Otherwise, its membership will be at least $n+2$, contradicting that $S$ is a feasible solution for $(H,k')$. Hence the claim.
\end{claimproof}

\begin{claimnew}
\label{claim:excludea}
For each block $B$ in $H$, the vertices $f$ and $f'$ in $B$ are not in the set $S$. 
\end{claimnew}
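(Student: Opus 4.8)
The argument is a short membership count centered on the vertex $f'$, and it leans entirely on Claim~\ref{claim:structT}. First I would recall the relevant adjacencies from the construction of a block $B$: the vertex $f'$ is adjacent only to $f$ and to the vertices $\ch_1,\dots,\ch_{n+1}$ of $C'(B)$, so $N(f') = \{f\}\cup C'(B)$ and $N[f'] = \{f',f\}\cup C'(B)$. Claim~\ref{claim:structT} gives us $C'(B)\subseteq S$, hence $C'(B)\subseteq N(f')\cap S$ and therefore $|N[f']\cap S| \ge |C'(B)| = n+1$ already, regardless of whether $f$ or $f'$ belongs to $S$.

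Now I would split into the two cases. If $f'\in S$, then $f'\in N[f']\cap S$ as well, so $M(f',S) = |N[f']\cap S| \ge (n+1) + 1 = n+2 > k'$, contradicting that $S$ is feasible for $(H,k')$; hence $f'\notin S$. If instead $f\in S$, then since $f\in N(f')\subseteq N[f']$ we again get $M(f',S) = |N[f']\cap S| \ge |C'(B)| + |\{f\}| = n+2 > k'$, a contradiction; hence $f\notin S$. Combining the two cases yields that neither $f$ nor $f'$ lies in $S$, as claimed.

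\textbf{Main obstacle.} There is essentially no obstacle here: the statement follows directly from Claim~\ref{claim:structT} by counting the $n+1$ forced vertices of $C'(B)$ in the closed neighborhood of $f'$. The only point requiring a moment of care is bookkeeping of whether $f'$ itself is counted in $M(f',S)=|N[f']\cap S|$; handling the two cases ($f'\in S$ and $f\in S$) separately, as above, makes this clean. Note also that this claim is what will later let us treat the $a_t$ vertices as the only vertices of a block that can participate in the solution through $f$, which is the role it plays in the subsequent correctness argument.
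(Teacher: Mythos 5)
Your proof is correct and takes exactly the same route as the paper: invoke Claim~\ref{claim:structT} to place the $n+1$ vertices of $C'(B)$ in $N(f')\cap S$, then observe that adding either $f$ or $f'$ to $S$ pushes $M(f',S)$ to $n+2>k'$, contradicting feasibility. Your case split is just a slightly more explicit bookkeeping of the paper's one-line argument.
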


\begin{claimproof}
We know that $f$ is made adjacent to $f'$, and $f'$  is adjacent to each vertex in $C'(B)$. 
From  Claim~\ref{claim:structT}, we know that $C'(B)$ is a subset of $S$.
Thus, $n+1$ neighbors of $f'$ is in $S$. 
If either $f$ or $f'$ is in the set $S$, then $M(f',S)$ is $n+2$. This contradicts the feasibility of the set $S$. Hence the claim.
\end{claimproof}

\begin{claimnew}
\label{claim:AZeroFull}
For each gadget of type $I$ in each block $B$ in the graph $H$, either $A \cap S = A$ or $A \cap S = \emptyset$. 
\end{claimnew}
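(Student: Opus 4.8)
The plan is to prove this dichotomy by a case analysis on whether the head vertex $h_1$ of the gadget $I$ lies in $S$, using repeatedly the basic property of the secondary gadget: for every $\D_{u,v}$ occurring in $H$, at least one of its two heads $u,v$ belongs to $S$. (This is the observation recorded informally just after the definition of $\D$; if neither $u$ nor $v$ were in $S$, then each of the $n+2$ middle vertices of $\D_{u,v}$ --- whose closed neighbourhood is exactly itself together with $u$ and $v$ --- would be forced into $S$, and then $M(u,S) \ge n+2 > n+1 = k'$, contradicting feasibility.) I would first isolate this as a standalone fact about $\D$-gadgets and then argue as follows.

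If $h_1 \notin S$, then for every $i \in [n]$ the gadget $\D_{a_i,h_1}$ forces $a_i \in S$, so $A \subseteq S$ and we are done. If $h_1 \in S$, I would run a tight counting argument on $M(h_1,S)$. One first checks that, inside $H$, the neighbourhood of $h_1$ is contained in the gadget $I$: it is exactly $h_2$, all of $A$ (the two heads of each $\D_{a_i,h_1}$ are adjacent), all of $D$ (adjacent to $h_1$ by construction), and the $n(n+2)$ middle vertices of the gadgets $\D_{a_i,h_1}$ --- the vertex $f$ and the connector vertices attach only to $A$-vertices, not to the heads. For each $i$ the gadget $\D_{a_i,d_i}$ forces at least one of $a_i,d_i$ into $S$, both of which are neighbours of $h_1$, and these $n$ pairs are pairwise disjoint, so $M(h_1,S)\ge 1+n$. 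Since $S$ is feasible we also have $M(h_1,S)\le n+1$, so every inequality is tight: for each $i$ exactly one of $a_i,d_i$ is in $S$, the vertex $h_2$ is not in $S$, and no middle vertex of any $\D_{a_i,h_1}$ is in $S$. Because $h_2 \notin S$, the gadget $\D_{d_i,h_2}$ now forces $d_i \in S$ for every $i$; combined with the fact that exactly one of $a_i,d_i$ is in $S$, this gives $a_i \notin S$ for all $i$, i.e.\ $A \cap S = \emptyset$. Combining the two cases yields $A\cap S \in \{A,\emptyset\}$ for every gadget $I$ in every block.

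The main thing to be careful about is the accounting in the second case: one must verify that $h_1$ has no neighbour outside the list above (so that the membership bound $n+1$ is genuinely tight and pins down all the unit contributions), and that the $n$ gadgets $\D_{a_i,d_i}$ really provide $n$ distinct units of membership at $h_1$. Everything else is a direct consequence of the $\D$-gadget property together with the membership cap $k' = n+1$.
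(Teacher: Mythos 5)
Your proof is correct and rests on the same two pillars as the paper's: the fact that every $\D$-gadget must have a head in $S$, and the membership cap $M(h_1,S)\le n+1$; the only difference is organizational (you do a direct case split on $h_1\in S$ and use tightness of the count at $h_1$ to force $h_2\notin S$, whereas the paper argues by contradiction from $\emptyset\subsetneq A\cap S\subsetneq A$ and cases on $h_2$). Your verification that $N(h_1)$ lies entirely inside the gadget is a worthwhile extra check, and your version has the mild advantage of showing explicitly that the dichotomy is governed by whether $h_1\in S$.
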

\begin{claimproof}
We prove this by contradiction. 
Assume that $\emptyset \subsetneq A \cap S \subsetneq A$. 
Let $J = \{j \in [n] \mid a_j \in S\}$, that is $J$ is the index of the elements in  $A \cap S$. Note that by our premise $J$ is non-empty and it is not all of $[n]$.  Since $J$ is a strict subset of $[n]$, we observe that
the vertex $h_1$ is in $S$. This is because,   for each $i \in [n] \setminus J$, $h_1$ and $a_i$ is connected by the gadget of type $\D$.  If both $a_i$ and $h_1$ are not in $S$, then  the $n+2$ neighbours in the gadget of type $\D$ containing the edge $\{a_i, h_1\}$ will be in $S$, and thus $M(a_i,S)$ and $M(h_1,S)$ are both at least $n+2$. This  violates the hypothesis that for each $u \in V(H)$, $M(u,S) \leq n+1$.  
We now consider two cases, one in which the vertex $h_2$ is in $S$ and the other in which $h_2$ is not in $S$. \\
First, we consider $h_2 \in S$. 
For each $i \in [n]$, by using the same argument which we used for $a_i$ and $h_1$, it follows that at least one of the $a_i$ or $d_i$ is in the set $S$ since $a_i$ and $d_i$ are both in a  gadget of type $\D$.
Therefore, for each $i \in [n]\setminus J$, the vertex $d_i$ is in $S$. 
That is $|D \cap S| \geq n-|J|$. 
Consequently, using the fact that $h_1 \in S$ and the premise that $h_2 \in S$, it follows that the membership of $h_1$ is 
\[M(h_1,S) \geq |A \cap S| + |D\cap S| + 2 \geq |J| + n-|J| + 2 \geq n+2.\]
This contradicts the feasibility of $S$. \\
Next we consider the case that $h_2$ is not in $S$. 
For each $i \in [n]$, $d_i$ is in $S$ since $d_i$ and $h_2$ are in a gadget of type $\D$.
Then, the $N[h_1] = (A \cap S) \cup D \cup \{h_1\}$. Further, we know that $J$ is a non-empty set, and thus, the membership of $h_1$ is 
\[M(h_1,S) \geq |A \cap S| + |D| + 1 \geq |J| + n + 1 \geq n+2.\]\ 
Therefore, our assumption that that $A \subsetneq S$ and $A \cap S \neq \emptyset$ is wrong.  Therefore, either the set $A$ is completely included in the set $S$ or completely excluded from the set $S$. 
\end{claimproof}

\begin{claimnew}
\label{claim:unqiueGadget}
For each block $B$ in the graph $H$, there exists a unique gadget of type $I$ in the block $B$ such that the set $A$ in the gadget is in $S$. 
\end{claimnew}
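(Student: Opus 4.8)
The plan is to analyse how the dominating set $S$ must cover the distinguished vertex $f$ of the block $B$. By Claim~\ref{claim:excludea} neither $f$ nor $f'$ belongs to $S$, and by construction the remaining neighbours of $f$ are exactly the vertices $a_1,\dots,a_n$ taken over \emph{all} the type-$I$ gadgets that sit inside $B$ (for a vertex-partition block these are $I_1,\dots,I_n$, for an edge-partition block one gadget per edge of $E_{i,j}$). Since $S$ dominates $f$, some such $a_t$ lies in $S$, say $a_t$ belongs to the gadget $I$; then Claim~\ref{claim:AZeroFull} upgrades this to $A\cap S = A$ for that gadget. This gives the existence part of the claim.

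For uniqueness, I would argue by contradiction: suppose two distinct type-$I$ gadgets $I,I'$ of $B$ both satisfy $A\subseteq S$ and $A'\subseteq S$, where $A,A'$ denote their respective $A$-sets. These sets are vertex-disjoint, and every vertex of $A\cup A'$ is adjacent to $f$ by construction of the block; since $f\notin S$ this forces
\[
M(f,S)\;\ge\;|A\cup A'|\;=\;2n .
\]
As we may assume $n\ge 2$ (a \mcc instance with a single vertex per colour class is decidable in constant time, so the reduction may output a fixed equivalent instance in that degenerate case), we get $2n > n+1 = k'$, contradicting the feasibility of $S$ for $(H,k')$. Hence the gadget whose $A$-set is contained in $S$ is unique.

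The whole argument is short once Claims~\ref{claim:excludea} and~\ref{claim:AZeroFull} are available; the counting inequality $2n>n+1$ does the real work. The only point requiring a bit of care — and the place I would expect a careless reader to stumble — is the bookkeeping that all gadgets of a given block share the same connector vertex $f=f(B)$, so that $M(f,S)$ really does see the $A$-vertices of every gadget of $B$ simultaneously; together with the trivial small-$n$ caveat, this is the sole mild obstacle.
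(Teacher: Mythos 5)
Your proof is correct and follows essentially the same route as the paper: existence comes from the fact that $f$ must be dominated while $f,f'\notin S$ (via Claims~\ref{claim:excludea} and~\ref{claim:AZeroFull}), and uniqueness from the count $M(f,S)\ge 2n > n+1$. Your explicit handling of the degenerate case $n=1$ (where $2n>n+1$ fails) is a small point of extra care that the paper silently glosses over, but it does not change the argument.
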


\begin{claimproof}
The vertices $f$ and $f'$ in $B$ are not in the solution $S$ due to Claim~\ref{claim:excludea}. 
The Claim~\ref{claim:AZeroFull} states that either the set $A$ in any gadget is completely included in the set $S$ or completely excluded in the set $S$. 
If for each gadget in $B$, the set $A$ is not in $S$ then the vertex $f$ is not dominated by $S$. 
This contradicts the feasibility of $S$. 
If the set $A$ of at least two gadgets in the block $B$ are in the set $S$, then the membership of $f$ will be $2n > n+1$. 
This contradicts the feasibility of $S$. 
Thus, there exists an unique gadget $I$ in each block such that the set $A$ in $I$ is in $S$. 
\end{claimproof}

\noindent Using these properties in the following two lemmas, we prove the correctness of the reduction.
\begin{lemma}
\label{lem:mcc:mmds}
If $(G,k)$ is a YES-instance of the \mcc problem, then $(H,k')$ is a YES-instance of the MMDS problem. 
\end{lemma}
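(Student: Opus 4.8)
The plan is to convert a multi-colored clique of $G$ directly into a feasible $(n+1)$-membership dominating set of $H$. Fix a solution $K=\{v_1,\dots,v_k\}$ of $(G,k)$ with $v_i=u_{i,x_i}\in V_i$. Since $K$ is a clique, for every $i<j$ the edge $e_{i,j}:=v_iv_j$ lies in $E_{i,j}$, so the gadget $I_{e_{i,j}}$ occurs in the edge-block $H_{i,j}$. Call $I_{x_i}$ the \emph{selected} gadget of the vertex-block $H_i$, call $I_{e_{i,j}}$ the \emph{selected} gadget of $H_{i,j}$, and call all the other gadgets \emph{non-selected}. I would build $S$ block by block, in the way dictated by the structural claims above: for every block $B$ put $C'(B)\subseteq S$ while keeping $f$, $f'$ and all the $b_q$ out of $S$; inside a selected gadget put all of $A$ and the vertex $h_2$ into $S$ (and nothing else of that gadget); inside a non-selected gadget put $h_1$ and all of $D$ into $S$ (and nothing else); and put no connector vertex of $R$ into $S$. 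The motivation is that in a selected gadget $A\subseteq S$ already dominates the $\D$-gadgets $\D_{a_t,d_t}$ and $\D_{a_t,h_1}$ through their heads, so only the gadgets $\D_{d_t,h_2}$ still need a head, which is why $h_2$ is taken; in a non-selected gadget $A\cap S=\emptyset$, so the heads $h_1$ and $D$ must be taken instead.

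I would then verify that $S$ dominates $H$. Inside a selected gadget every vertex is dominated by some $a_t\in S$ or by $h_2$; inside a non-selected gadget every vertex is dominated by $h_1$ or by some $d_t$; the internal $\D$-vertices are dominated by whichever head was put in $S$; $f$ is dominated by the $A$-set of the selected gadget of its block; $f'$ and each $b_q$ are dominated by a vertex of $C'(B)$; and each connector vertex is dominated by $a_1$ or $a_n$ of an incident selected gadget (using $1\le x_i\le n$).

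The heart of the proof is the membership bound $M(u,S)\le n+1=k'$ for every $u\in V(H)$. For the internal $\D$-vertices, for the $a_t$ and $d_t$, and for the $b_q$, the closed neighbourhood meets $S$ in at most two vertices, so there is nothing to check. For $f$, exactly the $n$ vertices of the $A$-set of the unique selected gadget of its block lie in $N(f)\cap S$, so $M(f,S)=n$; for $f'$, $M(f',S)=|C'(B)|=n+1$; and in every gadget $N[h_1]\cap S$ and $N[h_2]\cap S$ both have size $n+1$ (in a selected gadget they equal $A\cup\{h_2\}$, in a non-selected one $D\cup\{h_1\}$). The only place the clique property is genuinely needed is the connectors. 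Take $s^i_{i,j}$: the only $S$-vertices adjacent to it are the $a_t$ with $t\le x_i$ of the selected gadget $I_{x_i}$ in $H_i$ and the $a_t$ with $t\ge x_i$ of the selected gadget $I_{e_{i,j}}$ in $H_{i,j}$ — the threshold is $x_i$ on both sides precisely because $e_{i,j}=u_{i,x_i}u_{j,x_j}$ — so $M(s^i_{i,j},S)=x_i+(n-x_i+1)=n+1$, and the symmetric computations give $M(r^i_{i,j},S)=M(s^j_{i,j},S)=M(r^j_{i,j},S)=n+1$. Therefore $S$ is feasible for $(H,k')$, proving the lemma.

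The step I expect to be the main obstacle is the connector bookkeeping: one has to make sure that the selected vertex-gadget $I_{x_i}$ and the selected edge-gadget $I_{e_{i,j}}$ present \emph{complementary} thresholds around the common value $x_i$ (``$\le x_i$'' on one side of the connector and ``$\ge x_i$'' on the other), so that the two partial $S$-counts add up to exactly $n+1$ rather than exceeding it. This is exactly the point at which $K$ being a clique (so that $I_{e_{i,j}}$ is simultaneously consistent with $I_{x_i}$ and $I_{x_j}$) enters, and it is also the reason the membership target was set to $k'=n+1$.
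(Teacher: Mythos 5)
Your construction of $S$ (putting $A\cup\{h_2\}$ into the selected gadgets, $D\cup\{h_1\}$ into the others, $C'(B)$ into every block, and nothing else) and your verification, including the connector computation $x_i+(n-x_i+1)=n+1$, are exactly the paper's proof. The only difference is cosmetic: you count $M(f,S)=n$ where the paper loosely states $n+1$, and your figure is the more accurate one; either way the bound $\le n+1$ holds.
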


\begin{proof}
Let $K = \{u_{i,x_i} \mid i \in [k]\}$ be a $k$-clique in $G$. 
That is, for each $i \in [k]$, $x_i$-th vertex of the partition $V_i$ is in the clique. 
Now we construct a feasible solution $S$ for the instance $(H,k')$ of the MMDS problem. 
The set $S$ consists of the following vertices.
For each $i \in [k]$,
\begin{itemize}
\item for each $\ell \in [n]$ with $\ell \not= x_i$, add $D \cup \{h_1\}$ 
in the gadget $I_\ell$ in the vertex-partition block $H_i$ to $S$, and
\item in the gadget $I_{x_i}$ in the vertex-partition block $H_i$, add $A \cup \{h_2\}$ to $S$, and 
\item add $C'(H_i)$ to $S$. 
\end{itemize}
For each $1 \leq i < j \leq k$, 
\begin{itemize}
\item for each edge $e \in E_{i,j}$ with $e \not= u_{i,x_i}u_{j,x_j}$, add $D \cup \{h_1\}$ in the gadget $I_e$ the edge-partition block $H_{i,j}$ to $S$, and 
\item for the edge $e=u_{i,x_i}u_{j,x_j}$, add $A \cup \{h_2\}$ in the gadget $I_e$ the edge-partition block $H_{i,j}$ to $S$, and 
\item add $C'(H_{i,j})$ to $S$. 
\end{itemize}
We show that $S$ is a feasible solution to the MMDS problem in $H$ for membership value $k'=n+1$. 

\noindent
First we show that the set $S$ is a dominating set in $H$. 
In each gadget in each block, we have added either $D \cup \{h_1\}$ or $A \cup \{h_2\}$ into $S$. 
Therefore, in every gadget of type $\D$ at least one head is in $S$. 
That is, $S$ dominates every vertex which is part of some gadget of type $\D$. 
Since every vertex in a gadget of type $I$ is part of some gadget of type $\D$, the gadget of type $I$ is dominated by $S$. 
Thus, every gadget of type $I$ is dominated by $S$. 

\noindent
Then we consider the vertices outside any gadget of type $I$. In any block $B$, this is the set $C(B)$.  
In each block $B$ in $H$, 
from Claim~\ref{claim:unqiueGadget}, 
vertices in the set $A$ of exactly one gadget of type $I$ is in $S$. Each of these vertices dominate $f$.
All other vertices in the block which are outside the gadgets are dominated by $C'(B)$ which is a subset of $S$ by definition. 
For $1 \leq i < j \leq k$, consider a connector vertex pair $(s_{i,j}^i, r_{i,j}^i)$ that connects the blocks $H_i$ and $H_{i,j}$. 
Both vertices were made adjacent to the vertices in the set $A$ of each gadget in the block $H_i$. 
Since $S$ contains the set $A$ in the gadget $I_{x_i}$ of $H_i$, both connector vertices are dominated. 
Thus all the connector vertices are dominated by $S$.  Therefore, $S$ is a dominating set of $H$. 

Next we show that the membership of any vertex $u \in V(H)$ in $S$ is $k'$, that is, we show that $M(u,S) = n+1$. 
Observe that the vertices in any gadget of type $I$ are solely dominated by the vertices of $S$ which are inside the gadget.
The maximum membership of $n+1$ is achieved by the vertices $h_1$ and $h_2$ for any gadget $I$. 
Therefore, the membership of any vertex in a gadget of type $I$ is $n+1$. 
In each block $B$, among the vertices $C(B)$, the maximum membership of value $n+1$ is achieved by the vertices $f$ and $f'$. 

\noindent
We next show crucially that the membership of the connector vertices is at most $n+1$. 
For each $1 \leq i < j \leq k$, consider the  edge $e=u_{i,x_i}u_{j,x_j} \in E_{i,j}$.
By construction of the set $S$, we picked the set $A$ only from the gadgets $I_e$ in $H_{i,j}$, $I_{x_i}$ from $H_i$, and $I_{x_j}$ from $H_j$.  From the reduction and the definition of $S$, it is clear that for all the other gadgets in a block,  the vertices in $S$ are not adjacent to the connector vertices.
Therefore, the $M(s_{i,j}^i,S)$  is $x_i + (n-x_i+1) = n+1$, and $M(r^i_{i,j},S)$ is $(n-x_i+1)+x_i = n+1$. Next, we consider the membership of $s_{i,j}^j = x_j + (n-x_j+1)$ and the membership of $r_{i,j}^j$ is $x_j + (n-x_j+1) = n+1$.  These membership values can be seen clearly from Figure~\ref{fig:connector}.  
Hence, the membership of any vertex in $V(H)$ is $n+1$. 
Thus, the instance $(H, k')$ is a YES-instance of the MMDS problem. 
\end{proof}


\begin{lemma}
\label{lem:mmds:mcc}
If $(H,k')$ is a YES-instance of the MMDS problem, then $(G,k)$ is a YES-instance of the \mcc~problem. 
\end{lemma}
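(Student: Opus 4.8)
The plan is to recover a multicolored $k$-clique of $G$ from the ``selected'' gadgets inside $S$, using the membership budget $k' = n+1$ on the connector vertices as the device that forces the selected edge gadgets to be consistent with the selected vertex gadgets. First I would apply Claim~\ref{claim:unqiueGadget} (together with Claim~\ref{claim:AZeroFull}) to fix, for each $i \in [k]$, the unique gadget $I_{x_i}$ in the vertex-partition block $H_i$ whose set $A$ lies in $S$, and for each $1 \le i < j \le k$ the unique gadget $I_{e^\star_{i,j}}$ in the edge-partition block $H_{i,j}$ whose set $A$ lies in $S$, writing $e^\star_{i,j} = u_{i,p_{i,j}}u_{j,q_{i,j}} \in E_{i,j}$. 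I then set $K = \{u_{i,x_i} \mid i \in [k]\}$, which meets each color class exactly once by construction, and reduce the lemma to the claim that $e^\star_{i,j} = u_{i,x_i}u_{j,x_j}$ for every pair $i<j$; granting this, $u_{i,x_i}u_{j,x_j} \in E_{i,j} \subseteq E$ for all pairs, so $K$ is the desired $k$-clique with one vertex per color class.

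To prove that claim I would fix a pair $1 \le i < j \le k$ and analyze the connector vertices $s_{i,j}^i$ and $r_{i,j}^i$ that join $H_i$ to $H_{i,j}$. The only neighbors these connectors have are $a$-vertices of type-$I$ gadgets in $H_i$ and $H_{i,j}$; by Claims~\ref{claim:AZeroFull} and~\ref{claim:unqiueGadget}, among all those gadgets only $I_{x_i}$ (in $H_i$) and $I_{e^\star_{i,j}}$ (in $H_{i,j}$) contribute any vertex of $A$ to $S$, while the $h_1$-, $h_2$- and $D$-vertices that may also lie in $S$ are not adjacent to connectors and hence irrelevant. Reading off the adjacencies prescribed by the reduction, $s_{i,j}^i$ then has exactly $x_i$ neighbors in $S$ coming from $A$ of $I_{x_i}$ (namely $a_1,\dots,a_{x_i}$) and exactly $n - p_{i,j} + 1$ coming from $A$ of $I_{e^\star_{i,j}}$ (namely $a_{p_{i,j}},\dots,a_n$), so $M(s_{i,j}^i,S) \ge x_i + (n - p_{i,j} + 1)$; since this is at most $n+1$, we get $x_i \le p_{i,j}$. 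Symmetrically, $r_{i,j}^i$ has $n - x_i + 1$ such neighbors from $I_{x_i}$ and $p_{i,j}$ from $I_{e^\star_{i,j}}$, and $M(r_{i,j}^i,S) \le n+1$ gives $p_{i,j} \le x_i$. Hence $p_{i,j} = x_i$, and the identical computation with the pair $s_{i,j}^j, r_{i,j}^j$ (joining $H_j$ to $H_{i,j}$) yields $q_{i,j} = x_j$, so $e^\star_{i,j} = u_{i,x_i}u_{j,x_j}$ as required.

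I expect the main obstacle to be purely the bookkeeping in the second step: one must carefully match the index ranges used in the construction (``$1 \le t \le \ell$'' versus ``$\ell \le t \le n$'', with the roles of $r$ and $s$ swapped on the edge side, and the boundary index lying in both ranges) so that the two contributions to a given connector add up to exactly $n+1$ precisely when the vertex and edge gadgets agree and to at least $n+2$ otherwise. Everything else is a direct appeal to the already-established structural claims; in particular, the possibility that a connector vertex itself belongs to $S$ only strengthens the inequalities above, and the domination requirement on the connectors need not be re-verified since $S$ is assumed to be a dominating set.
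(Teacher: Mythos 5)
Your proposal is correct and follows essentially the same route as the paper's proof: both invoke Claims~\ref{claim:AZeroFull} and~\ref{claim:unqiueGadget} to isolate the unique selected gadget per block, and then use the membership bound $n+1$ on the two connector-vertex pairs to derive the two opposing inequalities that force the selected edge gadget's endpoints to coincide with the selected vertex gadgets. The index bookkeeping you flag as the main obstacle works out exactly as you describe, matching the paper's computation.
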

\begin{proof}
Let $S$ be a feasible solution to the instance ($H,k'$) of the MMDS problem. 
For each $i \in [k]$, let $I_{x_i}$ be the unique gadget for some $x_i \in [n]$, where the set $A$ of $I_{x_i}$ is in $S$. 
For each $1 \leq i < j \leq k$, let $I_{e}$ be the unique gadget for some $e = u_{i,x_i'}u_{j,x_j'} \in E_{i,j}$, where the set $A$ of $I_e$ is in $S$. 
The existence of such gadgets are ensured by Claim~\ref{claim:unqiueGadget}. 
Let $K = \{u_{i,x_i} \mid i \in [k]\}$. 
We show that the set $K$ is a clique in $G$ as follows. 
Observe that we picked one vertex from each partition $V_i$ for $i \in [k]$. 
Next we show that for each $1 \leq i < j \leq k$, there is an edge $u_{i,x_i}u_{j,x_j} \in E(G)$. 
Let $i, j \in [k]$ such that $i < j$. 
The vertex $s_{i,j}^i$ is adjacent to $x_i$ vertices in $I_{x_i}$ from $H_i$, and $n-x_i'+1$ vertices in $I_e$ from $H_{i,j}$. 
The vertex $r_{i,j}^i$ is adjacent to $n-x_i+1$ vertices in $I_{x_i}$ from $H_i$, and $x_i'$ vertices in $I_e$ from $H_{i,j}$. 
Then, the membership of the connector vertices $r^i_{i,j}$ and $s^i_{i,j}$ in $S$ are
\[M(r^i_{i,j}, S) \geq (n-x_i+1) + (x_i')\geq n+x_i'-x_i+1\textrm{ , and}\] 
\[M(s^i_{i,j}, S) \geq x_i + (n-x_i'+1) \geq n+x_i-x_i'+1.\]
Further, the membership of the vertices is at least one and at most $n+1$, that is $1 \leq M(r_{i,j}^i, S), M(s_{i,j}^i, S) \leq n+1$. 
Therefore, $n+1 \geq n+x_i'-x_i + 1\Longrightarrow x_i \geq x_i'$ and $n+1 \geq n+x_i-x_i' + 1\Longrightarrow x_i' \geq x_i$. 
Thus, we have $x_i = x_i'$. 
Similarly, we will get $x_j = x_j'$. 
Therefore, by construction of the graph $H$, there is an edge $u_{i,x_i}u_{j,x_j} \in E(G)$. 
Thus, the set $K$ is a feasible solution for the instance ($G,k$) of the \mcc problem. 
\end{proof}

Thus, we conclude the section with the proof of Theorem~\ref{thm:twhard}. 
\begin{proof}[Proof of Theorem~\ref{thm:twhard} ]
On an instance $(G,k)$ of \mcc the reduction constructs $(H,k'=n+1)$ in polynomial time. 
From Lemma~\ref{lem:HPathwidth} we know that the pathwidth of $H$ is a quadratic function of $k$.  Finally, from Lemma~\ref{lem:mcc:mmds} and Lemma~\ref{lem:mmds:mcc} it follows that the MMDS instance  $(H,k')$ output by the reduction is equivalent to the \mcc instance $(G,k)$ that was input to the reduction. Since \mcc is known to be W[1]-hard for the parameter $k$, it it follows that the MMDS problem is W[1]-hard with respect to the parameter pathwidth of the input graph. 
\end{proof}

\section{W[1]-hardness in split graphs}
\label{sec:splithard}
In this section we prove that \mmds is \textsc{W}[1]-hard on split graphs when parameterized by the membership parameter $k$. We prove this result by demonstrating a parameterized reduction from \mcis (MIS) to \textsc{Minimum Membership Dominating Set}. \mcis requires finding a colorful independent set of size $k$ and is known to be \textsc{W}[1]-hard for the parameter solution size~\cite{FELLOWS200953} 
\begin{tcolorbox}
\mcis\\
\textsf{Input}: A positive integer $k$, and a $k$-colored graph $G$.\\
{\sf Parameter}: $k$\\
{\sf Question}: Does there exist an independent set of size $k$ with one vertex from each color class?
\end{tcolorbox}

Let $(G=(V,E),k)$ be an instance of the \mcis problem. 
Let $V=(V_1,\ldots, V_k)$ be the partition of the vertex set $V$, where vertices in set $V_i$ belong to the $i^{th}$ color class, $i\in[k]$. We now show how to construct a split graph $H=(V'\cup V'', E')$ such that if $(G,k)$ is a YES instance, then $H$ has a dominating set with maximum membership $k$. $V'$ refers to the clique partition of $H$ and $V''$ consists of the partition containing a set of independent vertices.

\noindent
\textbf{Construction of graph $H=(V'\cup V'', E')$:}

\begin{figure}
\centering
\begin{tikzpicture}[scale=0.8, every node/.style={scale=0.8}]
\filldraw[color=red!60, fill=orange!25,  thick](10.3,3.2) -- (3,6.7)[dotted]--(3,5)--(10.3,3.2);
\filldraw[color=blue!60, fill=cyan!25,  thick](10.3,3.2) -- (5,6.7)--(5,5)--(10.3,3.2);

\draw (3,6) ellipse (0.5cm and 1.75cm);  
\draw (5,6) ellipse (0.5cm and 1.75cm);
\draw[very thick,dotted] (6,6)--(7,6);
\draw (8,6) ellipse (0.5cm and 1.75cm);
\draw (3.5,0.5) rectangle (2.5,3.5);	 
\draw (5.5,0.5) rectangle (4.5,3.5);
\draw[very thick,dotted] (6,2)--(7,2);
\draw (8.5,0.5) rectangle (7.5,3.5);
\fill[black](3,7) circle (0.1);         
\draw(3,7.3) node{$u$};
\fill[black](3,5) circle (0.1);         
 \fill[black](3,6.7) circle (0.1);    
\fill[black](5,7) circle (0.1);       
\draw(5,7.3) node{$v$};
 \fill[black](5,6.7) circle (0.1);	  
\draw (3,7)--(5,7);
 \fill[black](5,5) circle (0.1);

\fill[black](3,3) circle (0.1);
\draw[very thick,dotted] (3, 2.5)--(3,1.5);
\fill[black](3,1) circle (0.1);
\draw (3,7) .. controls (1.5,5) .. (3,3);
\draw (3,7) .. controls (0.5,4) .. (3,1);
\draw (3,5) .. controls (2.7,4) .. (3,3);
\draw (3,5) .. controls (1.5,3) .. (3,1);

\fill[black](11.3,6) circle (0.1);
\draw(11.3,6.3) node{$w$ };

\draw (11.3,6) -- (10.3, 3.2);
\draw (11.3,6) -- (12.3, 3.2);
\draw (11.3,6) -- (10.3, 0.8);
\draw (11.3,6) -- (12.3, 0.8);
\fill[black](10.3,0.8) circle (0.1);
\fill[black](12.3,0.8) circle (0.1);
\draw [dotted](10.4,0.8) -- (12.2, 0.8);
\draw[very thick,dotted](2,4) rectangle (14,9);
\draw(3,8) node{$V_1$};
\draw(5,8) node{$V_2$};
\draw(8,8) node{$V_k$};

\draw(3,0) node{$U_1$};
\draw(5,0) node{$U_2$};
\draw(8,0) node{$U_k$};
\draw [decorate,decoration={brace,amplitude=8pt},xshift=-4pt,yshift=0pt]
(1,0.5) -- (1,3.5) node [black,midway,xshift=-0.8cm] 
{ $k+1$};

\draw (11.3,3.2) ellipse (1.35cm and 0.4cm);         
\draw(10.6,3) node{$x_{uv}$};
\fill[black](10.3,3.2) circle (0.1);
\fill[black](12.3,3.2) circle (0.1);
\draw [dotted] (10.5,3.2) -- (12,3.2);
\draw[very thick,dotted] (11.3, 1.5)--(11.3, 2.5);
\draw (11.3,0.8) ellipse (1.30cm and 0.4cm);
\draw(13.2,3) node{$D_{12}$};
\draw(13.2,1) node{$D_{pq}$};
 
\draw [decorate,decoration={brace,amplitude=8pt,mirror,raise=4pt},yshift=0pt]
(14,0.5) -- (14,3.5) node [black,midway,xshift=0.8cm]
{ ${k \choose 2}$};

\end{tikzpicture}
\caption{Construction of graph $H$}
\label{fig:splithardness}
\end{figure}
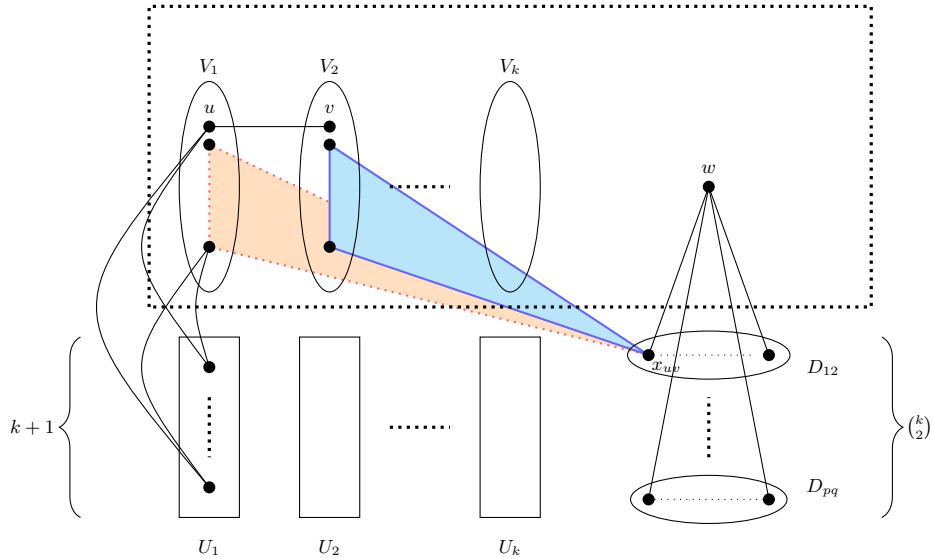
For each vertex in $V$ we introduce a vertex in the clique $V'$ as in the input instance. Additionally we add a vertex  $w$ to $V'$.  Edges are added among each pair of vertices in $V'$.  
The  set $V''$ in $H$ is an independent set, and it consists of a set of vertices denoted by $U$, a set of vertex sets denoted by $\mathcal{D}=\{D_{pq}\mid p,q\in [k], p < q\}$. 
The vertex set $U$ comprises a partition of $k$ vertex sets, $U=\{U_i\mid  i\in [k]\}$, and $|U_i| = k+1$. For each edge between a vertex $u\in V_p$ and $v\in V_q$ in $G$, we introduce a vertex $x_{uv}$ in the set $D_{pq}$. Thus, the vertex set of $H$, $V(H)=V'\cup V''$, where $V'$ induces a clique and $V''$ induces an independent set.\\
\noindent 
The remaining edges, other than those in clique $V'$, are described as follows: $V_i\uplus U_i$ forms a complete bipartite graph, 
 vertex $w$ is made adjacent to all vertices in the set $D$, 
each vertex $x_{uv}\in D_{pq}$ is made adjacent to every vertex in $V_p\setminus\{u\}$ and $V_q\setminus\{v\}$.
The above construction is depicted in Figure~\ref{fig:splithardness}.
Next we show the correctness of the reduction from the instance $(G,k)$ of MIS to the instance $(H,k)$ of MMDS.
\begin{lemma}
\label{lem:mcis-mmds}
If $(G,k)$ is a YES instance of the \mcis problem then $(H,k)$ is a YES instance of the \mmds problem.
\end{lemma}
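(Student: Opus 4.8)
The plan is to exhibit an explicit $k$-membership dominating set of $H$ built directly from a colorful independent set of $G$. Let $\{v_1,\dots,v_k\}$ with $v_i\in V_i$ be such an independent set, and set $S=\{v_1,\dots,v_k\}\subseteq V'$. The one point that needs care in choosing $S$ is that we must \emph{not} add $w$: since $V'$ is a clique, every vertex of $V'$ already sees all of $S$ in its closed neighborhood, so throwing in $w$ would push the membership of each $v_i$ up to $k+1$ and break feasibility.

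First I would check that $S$ dominates $H$. Every vertex of $V'$ (including $w$) is adjacent to, or equal to, each $v_i$ because $V'$ is a clique. Every vertex of $U_i$ is adjacent to $v_i$ since $V_i\uplus U_i$ is complete bipartite. Finally, consider $x_{uv}\in D_{pq}$, corresponding to an edge $uv\in E(G)$ with $u\in V_p$, $v\in V_q$; it is adjacent to $V_p\setminus\{u\}$ and to $V_q\setminus\{v\}$, so it is dominated by $v_p$ unless $v_p=u$, and by $v_q$ unless $v_q=v$. If both $v_p=u$ and $v_q=v$, then $x_{uv}=x_{v_pv_q}$ would witness an edge $v_pv_q\in E(G)$, contradicting that $\{v_1,\dots,v_k\}$ is independent; hence every vertex of $D$ is dominated. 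This is the only place the independence hypothesis is used, and it is the crux of the argument.

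Next I would bound the membership of every vertex. For $z\in V'$ (a copy of a vertex of $G$, or $w$) we have $N[z]\supseteq V'\supseteq S$, and since $S\subseteq V'$ this gives $N[z]\cap S=S$, so $M(z,S)=k$. For $z\in U_i$ we have $N[z]=\{z\}\cup V_i$, so $N[z]\cap S=\{v_i\}$ and $M(z,S)=1$. For $z=x_{uv}\in D_{pq}$ we have $N[z]=\{z,w\}\cup(V_p\setminus\{u\})\cup(V_q\setminus\{v\})$, and since $w\notin S$ this yields $N[z]\cap S\subseteq\{v_p,v_q\}$, hence $M(z,S)\le 2$. (When $k=1$ the set $\mathcal{D}$ is empty, and when $k\ge 2$ we have $2\le k$.) Therefore $S$ is a dominating set of $H$ whose maximum membership is exactly $k$, so $(H,k)$ is a YES-instance of \mmds.

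The argument is short; the only genuine decision is the form of $S$ (in particular, deliberately excluding $w$ so that the clique vertices do not exceed membership $k$), and the only nontrivial verification is that the independence of $\{v_1,\dots,v_k\}$ is exactly what guarantees the "missing" vertex $x_{v_pv_q}$ does not exist, which is what lets $S$ dominate the $D_{pq}$-gadgets without any help from $w$. I expect that last observation to be the main (and essentially only) obstacle.
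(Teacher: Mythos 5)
Your proposal is correct and follows essentially the same route as the paper: take the $k$ colorful independent vertices in the clique part as the solution (excluding $w$), use independence to argue that each $x_{uv}\in D_{pq}$ retains a dominator among $V_p\setminus\{u\}$ and $V_q\setminus\{v\}$, and bound every membership by $k$ (the paper simply notes $|S|=k$ forces this, while you verify it per vertex type). Your case analysis for the $D_{pq}$ gadgets is in fact stated a bit more carefully than the paper's, but the argument is the same.
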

\begin{proof}
Let $(G,k)$ be a YES instance of \mcis, and $S = \{v_1,\ldots, v_k\}$ be a solution to  $(G,k)$ where $v_i\in V_i,~i\in [k]$. We show that the vertices in $V(H)$ that correspond to the set $S$, denoted by $S'$, form a dominating set in $H$ with membership value $k$. We start by showing that $S'$ is a dominating set for $H$. Observe that since $S'\subset V'$ and $V'$ induces a clique, $S'$ dominates all vertices in the set $V'$. For each $i\in[k]$, the vertex $v_i\in S'$ dominates all vertices in $U_i$, since $V_i$ and $U_i$ together form a complete bipartite graph. Consider a vertex $x_{uv}\in D_{pq}$ where  $ u\in V_p, v\in V_q$, which represents the edge $(u,v)\in G$. The vertex $x_{uv}$ is connected to every vertex in $V_p\setminus \{u\}$ and $V_q\setminus \{v\}$. Since $S$ is an independent set, both $u$ and $v$ cannot belong to $S$. Without loss of generality, let $u\in S$, then $\exists v'_{v'\in S'} \in V_q\setminus \{v\}$ which dominates $x_{uv}$. This holds true for all vertices in the set $\{x_{uv}\in D_{pq}\mid1\leq p,q\leq k, p\neq q\}$. Thus $S'$ is a dominating set for $H$.\\

\noindent Secondly, we observe that the membership constraint $k$ is satisfied by the dominating set $S$ due to the fact that $|S| =  k$. It follows that for all vertices $v$ in $H$, $N[v]\cap S \leq k$. Hence it is proved that if $(G,k)$ is a YES instance of the \mcis problem then $(H,k)$ is a YES instance of the \mmds problem.
\end{proof}

\begin{lemma}\label{lem:mmds-mcis}
If $(H,k)$ is a YES instance of the \mmds problem then $(G,k)$ is a YES instance of the \mcis problem.
\end{lemma}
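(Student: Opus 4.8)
The plan is to establish the reverse direction of the reduction: given a feasible $k$-membership dominating set $S'$ of $H$, extract a colorful independent set of size $k$ in $G$. First I would analyze the structure that $S'$ is forced to have. The key pressure points are the $U_i$-blocks and the $D_{pq}$-sets. Since each $U_i$ is an independent set of $k+1$ vertices whose only neighbors in $H$ lie in $V_i\cup V'$ (actually only in $V_i$ together with the clique structure — I need to check the exact adjacencies), a vertex $u \in U_i$ can only be dominated either by putting $u$ itself into $S'$ or by putting some vertex of $V_i$ into $S'$. If $S'$ contained no vertex of $V_i$, then all $k+1$ vertices of $U_i$ would have to be in $S'$; but each vertex of $V_i$ is adjacent to all of $U_i$, so any $v\in V_i$ would have $|N[v]\cap S'|\ge k+1$, violating the membership bound — so in fact $S'$ must contain at least one vertex of $V_i$ (or we directly get a contradiction). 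Summing over $i\in[k]$ this already forces $|S'\cap V'|\ge k$, and then the membership bound at any vertex $v\in V'$ (all of $V'$ is a clique, so $|N[v]\cap S'|\ge |S'\cap V'|$) forces $|S'\cap V'|\le k$ and hence exactly one vertex $v_i$ of each $V_i$, with $w\notin S'$ and $S'\cap V'=\{v_1,\dots,v_k\}$.

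Next I would argue that $S'$ must in fact equal $\{v_1,\dots,v_k\}$ exactly, i.e. contain no vertices of $V''$: any additional vertex would push the membership of some already-dominated clique vertex or of $w$ above $k$. Actually, a cleaner route: since $w$ is adjacent to every vertex of $D=\bigcup D_{pq}$ and to every vertex of $V'$, and $w$ must be dominated and its membership is at most $k$, and $S'$ already has $k$ vertices of $V'$ in $N[w]$, no vertex of $D$ can be in $S'$; and no vertex of $U$ can be in $S'$ either, since each $u\in U_i$ lies in the neighborhood of its dominating vertex $v_i\in V_i$, which already has membership $k$ from the clique. So $S'=\{v_1,\dots,v_k\}$, one per color class.

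Finally I would show $\{v_1,\dots,v_k\}$ corresponds to an independent set in $G$. Suppose for contradiction $v_pv_q\in E(G)$ for some $p<q$; then there is a vertex $x_{v_pv_q}\in D_{pq}$. By construction $x_{v_pv_q}$ is adjacent exactly to $V_p\setminus\{v_p\}$ and $V_q\setminus\{v_q\}$ (and to $w$). But $S'\cap(V_p\setminus\{v_p\})=\emptyset$ and $S'\cap(V_q\setminus\{v_q\})=\emptyset$ since the only chosen vertices in colors $p,q$ are $v_p,v_q$, and $w\notin S'$. Hence $N[x_{v_pv_q}]\cap S'=\emptyset$, so $x_{v_pv_q}$ is not dominated, contradicting that $S'$ is a dominating set. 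Therefore no two of $v_1,\dots,v_k$ are adjacent, and mapping them back to $G$ gives a colorful independent set of size $k$, so $(G,k)$ is a YES instance.

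The main obstacle I anticipate is getting the forcing arguments about which vertices of $V''$ can or cannot be in $S'$ completely tight — in particular being careful that the membership counts at clique vertices and at $w$ are computed correctly (remembering $V'$ is a clique so every chosen clique vertex contributes to every clique vertex's membership), and making sure the domination requirement for $U_i$ cannot be met by vertices of $U_i$ itself in a way that still leaves the membership bound satisfied. Once the skeleton $S'\cap V'=\{v_1,\dots,v_k\}$ and $S'\cap V''=\emptyset$ is nailed down, the independence argument via the $x_{v_pv_q}$ gadget vertices is immediate.
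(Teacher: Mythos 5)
Your proposal is correct and follows essentially the same route as the paper's proof: force exactly one vertex of each $V_i$ into the solution via the $U_i$-blocks and the clique membership bound, exclude $w$, $U$, and $D$ from the solution, and then use the non-adjacency of $x_{v_pv_q}$ to $v_p$ and $v_q$ to derive independence. The only (minor, valid) difference is that you rule out all of $D$ uniformly via the membership of $w$, whereas the paper argues via the membership of $V_p\setminus\{u\}$ and $V_q\setminus\{v\}$ and falls back on $w$ only when those sets are empty.
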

\begin{proof}
Let $(H,k)$ be a YES instance and $S$ be a feasible solution for \mmds in the graph $H$ with membership value $k$. Since $S$ is a $k$ membership dominating set of $H$, $S$ exhibits the following properties.
\begin{enumerate}
\item $|S\cap V_i| = 1 , i\in [1,k]$.\\
 		At least one vertex from each of the sets $V_i$ must belong to $S$. Otherwise in order to dominate $U_i$, all $k+1$ vertices from $U_i$ need to be included in $S$, thus violating the membership constraint for the vertices in $V_i$. Observe that $S$ contains exactly one vertex from each set $V_i$, $i\in[k]$. As the $V_i$'s are a part of the clique $V'$ in $H$, if more than one vertex from a $V_i$ is included in the solution, then $|V'\cap S| > k$ and the membership constraint of vertices in $V'$ will be violated. Therefore exactly one vertex  from each  $V_i$ is included in $S$.
 		
\item $w\notin S$. \\
		The  vertex $w$ is part of the clique $V'$ and $S$ already contains $k$ vertices from $V'$ and any more vertices from $V'$ will violate the membership constraint of vertices in $V'$.
		
\item  $|U_i \cap S|= 0, i\in [1,k] $.\\
 		Every vertex in $U_i, i\in [1,k] $ is already dominated by a vertex in the corresponding $V_i$, and has membership value $1$. If a vertex from $U_i, i\in [1,k]$ is included in $S$, all vertices in $V_i$ will have membership $k+1$ leading to a  violation of the membership constraint for them. 
 
\item  $|S\cap D_{pq}| = 0, p,q\in[1,k], p < q$.\\
		Every vertex $x_{uv}\in D_{pq}$ is adjacent to $V_p\setminus u$ and $V_q\setminus v$, $u\in V_p, v\in V_q$. As $V'\cap S = k$, adding any vertex $x_{uv}\in D_{pq}$ to $S$ will violate the membership constraint of vertices in $V_p\setminus u$ and $V_q\setminus v$. Even if  $|V_p|=1$ or $|V_q|=1$,  adding $x_{uv}$ to $S$ will violate the membership constraint of $w\in V'$.
\end{enumerate}
  It follows from the above properties of $S$ that $S$ contains a vertex from each of the vertex sets $V_i$, $i\in[k]$. Let $S = \{v_1, v_2, \ldots, v_k~|~ v_i\in V_i, 1\leq i \leq k \}$. 
We now prove that the vertices corresponding to $S$ in $G$, say $S'$, form an independent set. Suppose not. This implies that $\exists v_i,v_j\in S$ such that $v_i v_j\in E(H)$. Without loss of generality, let $v_i\in V_i, v_j\in V_j$. Consider the vertex $x_{v_iv_j}\in D_{ij}$. Due to the construction of graph $H$, the vertex $x_{v_iv_j}$ is not adjacent to $v_i$ and $v_j$, and hence not dominated. This is a contradiction to the fact that $S$ is a  dominating set for $H$. Therefore, vertices $v_i,v_j \in S, i,j\in[1,k]$ cannot have an edge between them. Hence it is proved that $S'$ is a solution of size $k$ for the given instance of \mcis implying that $(G,k)$ is a YES instance.
\end{proof}
\begin{proof} [Proof of Theorem~\ref{thm:splitreduction} ]
Lemma~\ref{lem:mcis-mmds} and Lemma~\ref{lem:mmds-mcis} along with the fact that \mcis is W[1]-hard \cite{FELLOWS200953} proves that \mmds is W[1]-hard parameterized by $k$, the membership.
\end{proof}

\section{Parameterizing \mmds by Vertex cover} 
\label{sec:parambyvc}

 First, we show that \mmds is FPT parameterized by vertex cover number, \textbf{vc}.
We then show that conditioned on the truth of the ETH,  \mmds does not have a subexponential algorithm in the size of vertex cover. 
 \subsection{\mmds is FPT parameterized by vertex cover}
In order to design an FPT algorithm parameterized by the size of a vertex cover of the input graph, we construct an FPT-time Turing reduction from \mmds to Integer Linear Programming (ILP, See Appendix for formal definition). In the reduced instance the number of constraints is at most twice the size of a minimum vertex cover. We then use the recent result by Dvořák \textit{et al.}~\cite{dvo-ijcai2017-85} which proves that ILP is FPT parameterized by the number of constraints. The following theorem directly follows from Corollary 9 of \cite{dvo-ijcai2017-85}.

\begin{tcolorbox}
Integer Linear Programming\\
\begin{tabular}{llp{9cm}}
\textsf{Input}&:& A matrix $A \in \mathbb{Z}^{m \times \ell}$ and a vector $b \in \mathbb{Z}^m$.\\
{\sf Parameter}&: &$m$\\
{\sf Question}&:& Is there a vector $x \in \mathbb{Z}^{\ell}$ such that $A \cdot x \leq b$?
\end{tabular}
\end{tcolorbox}

\begin{theorem}[Corollary 9,~\cite{dvo-ijcai2017-85}]
\label{thm:ILPfptconsraints}
ILP is FPT  in the number of constraints and the maximum number of bits for one entry. 
\end{theorem}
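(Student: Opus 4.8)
The plan is to prove Theorem~\ref{thm:ILPfptconsraints} by a two-stage reduction: first compress the number of variables down to a function of the parameters by \emph{aggregating} variables that share an identical column, and then invoke Lenstra's algorithm for integer programming in fixed dimension on the compressed instance.

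First I would fix notation. Let $L$ denote the maximum number of bits used to encode a single entry, so that every entry of the constraint matrix satisfies $|A_{ij}| < 2^{L}$. Writing $a_1, \ldots, a_\ell \in \mathbb{Z}^{m}$ for the columns of $A$, each $a_i$ is a vector in $\{-2^{L}+1, \ldots, 2^{L}-1\}^{m}$, and hence the number of \emph{distinct} columns is at most $t := (2^{L+1}-1)^{m}$, a quantity depending only on the two parameters $m$ and $L$. This is the key counting observation that makes the dimension controllable.

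The central step is the aggregation. Let $c_1, \ldots, c_t$ enumerate the distinct columns of $A$, and for each $s$ let $I_s \subseteq [\ell]$ be the non-empty set of indices $i$ with $a_i = c_s$. Because the variables range over all of $\mathbb{Z}$ with no sign or bound constraints, I would observe the identity
$$A x \;=\; \sum_{i=1}^{\ell} a_i\, x_i \;=\; \sum_{s=1}^{t} c_s \Bigl( \sum_{i \in I_s} x_i \Bigr),$$
so that substituting $y_s = \sum_{i \in I_s} x_i$ shows that $A x \le b$ has a solution $x \in \mathbb{Z}^{\ell}$ if and only if the compressed system $C y \le b$ has a solution $y \in \mathbb{Z}^{t}$, where $C$ is the $m \times t$ matrix with columns $c_1, \ldots, c_t$. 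One direction is immediate from the displayed identity; for the converse, given any $y \in \mathbb{Z}^{t}$ I would recover a feasible $x$ by assigning $y_s$ to a single representative index in each class $I_s$ and zero to the rest. The compressed instance still has exactly $m$ constraints but only $t$ variables, i.e.\ a number of variables bounded by a function of $m$ and $L$ alone, and its entries are again bounded by $2^{L}$.

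Finally, I would apply Lenstra's theorem (or Kannan's refinement), which decides feasibility of an integer program in $d$ variables in time $d^{O(d)}$ times a polynomial in the total input size; taking $d = t$, which is a function of $m$ and $L$, yields a running time of the form $f(m,L)\cdot|\text{input}|^{O(1)}$, establishing fixed-parameter tractability in the number of constraints together with the entry bit-length. I expect the correctness of the aggregation step to be the main obstacle to state cleanly: the reduction of each variable class to its sum preserves feasibility precisely because the variables are unbounded integers, so I would emphasize this hypothesis, noting that the same compression breaks down if individual variables carry explicit bounds.
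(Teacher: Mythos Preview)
The paper does not prove Theorem~\ref{thm:ILPfptconsraints} at all: it is stated as a black-box citation of Corollary~9 in Dvo\v{r}\'ak \textit{et al.}~\cite{dvo-ijcai2017-85} and is then invoked in Lemma~\ref{lem:cmmdsfpt}. So there is no ``paper's own proof'' to compare against; the paper treats this as an external result.

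That said, your proposed argument is a correct and self-contained proof of the feasibility version of ILP exactly as it is boxed in the paper (unbounded integer variables, constraints $Ax\le b$). The column-aggregation step is the heart of it and is sound precisely because the variables are unconstrained integers, as you correctly flag: the map $x\mapsto (y_s)_s$ with $y_s=\sum_{i\in I_s}x_i$ is surjective onto $\mathbb{Z}^{t}$, so feasibility is preserved in both directions. The bound $t\le (2^{L+1}-1)^{m}$ on the number of distinct columns is the only place the entry bit-length enters the FPT function, and Lenstra/Kannan then finishes the job. One small point worth stating explicitly is that the bit-length bound $L$ on the entries of $b$ is absorbed into the polynomial factor of Lenstra's running time, not into $f(m,L)$; your write-up leaves this implicit.

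Your route is arguably more elementary than what the cited source needs, because Dvo\v{r}\'ak \textit{et al.}\ treat a more general setting (optimisation, variable bounds), where naive column aggregation fails for exactly the reason you mention at the end. For the pure feasibility question with free integer variables that this paper actually uses, your compression-plus-Lenstra argument is both shorter and sufficient.
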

\noindent
{\bf FPT time Turing reduction from \mmds to ILP}: Let $(G, k)$ be the input instance of \mmds.  Compute a minimum vertex cover of $G$, denoted by $C$, in time FPT in $|C|$ \cite{CyganFKLMPPS15}.
Let $I$ denote the maximum independent set $V \setminus C$.
The following lemma is crucial to the correctness of the reduction.

\begin{lemma}
\label{lem:kmdsvc}
 Let $D$ be a $k$ membership dominating set of $G$. Let $C_1 = D \cap C$,  $I_1 = I \setminus (N(C_1) \cap I)$, and 
 $R = N(C_1) \cap I \cap D$.  Then, $I_1 \subseteq D$, and $C \setminus (N[C_1] \cup N(I_1))$ is dominated by $R$.
\end{lemma}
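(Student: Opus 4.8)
The plan is to verify the three assertions in order: first $I_1\subseteq D$, then the claim about $C\setminus(N[C_1]\cup N(I_1))$ being dominated by $R$. The key observation to keep in mind throughout is the definition $I_1 = I\setminus(N(C_1)\cap I)$, so $I_1$ consists precisely of the independent-set vertices that have \emph{no} neighbor in $C_1=D\cap C$.

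First I would prove $I_1\subseteq D$. Take any $v\in I_1$. Since $v\in I$ and $I$ is an independent set, $N(v)\subseteq C$. Because $v\notin N(C_1)$, none of the neighbors of $v$ lies in $C_1=D\cap C$; combined with $N(v)\subseteq C$ this means $N(v)\cap D=\emptyset$, i.e.\ no neighbor of $v$ is in $D$. But $D$ is a dominating set, so $v$ itself must be in $D$. Hence $I_1\subseteq D$. This step is short and is essentially just unwinding the definitions, so I do not expect it to be the obstacle.

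Next I would handle the domination claim. Let $w\in C\setminus(N[C_1]\cup N(I_1))$; I must show $w$ has a neighbor in $R = N(C_1)\cap I\cap D$, or equals such a vertex — but $w\in C$ and $R\subseteq I$, so I really need a neighbor of $w$ in $R$. Since $D$ dominates $G$, either $w\in D$ or $w$ has a neighbor in $D$. I would argue $w\notin D$: if $w\in D$ then $w\in D\cap C=C_1$, so $w\in N[C_1]$, contradicting the choice of $w$. Therefore $w$ has a neighbor $u\in D$. Now case on where $u$ lies. If $u\in C$, then $u\in D\cap C=C_1$, so $w\in N(C_1)\subseteq N[C_1]$ — contradiction. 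Hence $u\in I$, so $u\in D\cap I$. I must further show $u\in N(C_1)$ and $u\notin I_1$ — actually I only need $u\in R=N(C_1)\cap I\cap D$, and we already have $u\in I\cap D$; the remaining piece is $u\in N(C_1)$, equivalently $u\notin I_1\cup(I\setminus N(C_1))$. Here is where I'd use $w\notin N(I_1)$: since $w$ is adjacent to $u$ and $w\notin N(I_1)$, we get $u\notin I_1$. Because $u\in I$ and $u\notin I_1=I\setminus(N(C_1)\cap I)$, it follows that $u\in N(C_1)\cap I$, hence $u\in N(C_1)$. Combining, $u\in N(C_1)\cap I\cap D=R$, and $u$ is a neighbor of $w$, so $w$ is dominated by $R$. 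This completes the argument.

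The main obstacle, modest as it is, is the bookkeeping in the second part: one has to carefully rule out $w\in D$ and $u\in C$ using the exclusion $w\notin N[C_1]$, and then convert "$u$ is not in $I_1$" (from $w\notin N(I_1)$) into "$u\in N(C_1)$" via the defining equation of $I_1$. Once the definitions are laid out side by side these implications are immediate, so the whole proof is a direct case analysis with no quantitative estimates needed.
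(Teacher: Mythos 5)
Your proof is correct and follows the same line as the paper: the paper only gives a two-sentence outline ($I_1$ can only be dominated by itself; only vertices of $R$ can dominate $C \setminus (N[C_1] \cup N(I_1))$), and your write-up is exactly that argument carried out in full, with the definitions unwound and the cases $w \in D$, $u \in C$ correctly ruled out via $w \notin N[C_1]$.
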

\begin{proof}
 The outline is that $I_1$ cannot be dominated by any other vertex other than by itself. Further, $R \subseteq D$ is the the only vertices which can dominate $C \setminus (N[C_1] \cup N(I_1))$. Hence the lemma.
\end{proof}

As a consequence of this lemma, it is clear that the choice of $C_1$ immediately fixes $I_1$.  Thus, to compute the set $D$, the task is to compute $R$. We pose this problem as the {\em constrained} MMDS problem. A CMMDS problem instance is a 4-tuple $(G,k,C, C_1)$ where $C$ is a vertex cover and $C_1$ is a subset of $C$.  The decision question is whether there is a $k$ membership dominating set $D$ of $G$ such that $D \cap C = C_1$.
From Lemma~\ref{lem:kmdsvc}, we know that given an instance of $(G,k,C,C_1)$, we know that $C_1$ immediately fixes $I_1 \subseteq I= V \setminus C$.  Thus, to compute $D$, we need to compute $R$ as defined in Lemma~\ref{lem:kmdsvc}. 
We now describe the ILP formulation to compute $R$ once $C_1$ (and thus $I_1$) is fixed.  Since $R$ is a subset of $I \setminus I_1$, it follows that the variables correspond to vertices in $I \setminus I_1$ which do not already have $k$ neighbors in $C_1$; we use $I_e$ to denote this set.  It can be immediately checked if $C_1 \cup I_1$ can be part of a feasible solution- we check that for no vertex is the intersection of its closed neighborhood greater than $k$.  We now assume that this is the case, and specify the linear constraints.  The linear constraints in the ILP are associated with the vertices in $C$. For each vertex in $C$ there are at most two constraints- if $v$ is in $C \setminus (N[C_1] \cup N(I_1))$, then at least one neighbor and at most $k$ neighbors from $I_e$ must be chosen into $R$.  On the other hand, for $v \in (N[C_1] \cap C) \cup N(I_1)$, we have the constraint that at most $k$ neighbors must be in $C_1 \cup I_1 \cup R$.  The choice of variables in $I_e$ does not affect any other vertex in $I$, and thus there are no costraints among the vertices in $I$.  
To avoid notation, we assume that an instance of CMMDS$(G,k,C,C_1)$, also denotes the ILP.  
\begin{lemma}
\label{lem:cmmdsfpt}
 The CMMDS problem on an instance $(G,k,C,C_1)$ can be solved in time which is FPT in the size of the vertex cover.
\end{lemma}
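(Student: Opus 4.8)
The plan is to show that the CMMDS problem reduces, in polynomial time, to an instance of Integer Linear Programming whose number of constraints is bounded by a function of the vertex cover size $|C|$, and then invoke Theorem~\ref{thm:ILPfptconsraints}. First I would fix the instance $(G,k,C,C_1)$ and perform the preprocessing described in the text: using Lemma~\ref{lem:kmdsvc}, the choice of $C_1$ forces $I_1 = I \setminus (N(C_1) \cap I) \subseteq D$, so $C_1 \cup I_1$ is a mandatory part of any solution consistent with $C_1$. In polynomial time I would verify that $C_1 \cup I_1$ is not already infeasible, i.e. that no vertex $v$ has $|N[v] \cap (C_1 \cup I_1)| > k$; if it is infeasible, reject. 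Otherwise I would identify the set $I_e \subseteq I \setminus I_1$ of independent-set vertices that are still eligible to be placed into $R$ (those not yet having $k$ neighbours already selected), and these become the $0/1$ variables $y_b$, $b \in I_e$, of the ILP.

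Next I would write down the linear constraints, one group per vertex of $C$, exactly as indicated in the text. For a vertex $v \in C \setminus (N[C_1] \cup N(I_1))$ — a vertex not yet dominated — I impose $1 \le \sum_{b \in N(v) \cap I_e} y_b + |N[v] \cap (C_1 \cup I_1)| \le k$, giving at most two inequalities (a lower domination constraint and an upper membership constraint). For a vertex $v \in (N[C_1] \cap C) \cup N(I_1)$ — already dominated — only the upper membership constraint $\sum_{b \in N(v) \cap I_e} y_b + |N[v] \cap (C_1 \cup I_1)| \le k$ is needed. I would also need to argue that membership constraints for vertices in $I$ are automatically satisfied: since $I$ is independent, a vertex $u \in I$ has all its neighbours in $C$, so $N[u] \cap D = (N(u) \cap C_1) \cup (\{u\} \cap (I_1 \cup R))$, whose size is determined once $C_1$ is fixed (plus at most one for $u$ itself), and the eligibility pruning of $I_e$ together with the preprocessing check ensures this never exceeds $k$. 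Thus no variable interaction occurs across $I$, and the total number of constraints is at most $2|C|$.

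The correctness argument then has two directions. If there is a $k$-membership dominating set $D$ with $D \cap C = C_1$, then by Lemma~\ref{lem:kmdsvc} $I_1 \subseteq D$ and $R = N(C_1) \cap I \cap D$ dominates $C \setminus (N[C_1] \cup N(I_1))$; setting $y_b = 1$ iff $b \in R$ satisfies all constraints (domination constraints because $R$ dominates the undominated part of $C$, and $D = C_1 \cup I_1 \cup R$ together with $I$'s vertices being dominated via $C$; membership constraints because $M(v, D) \le k$ for all $v$). Conversely, a feasible $0/1$ vector $y$ defines $R = \{b : y_b = 1\}$ and then $D = C_1 \cup I_1 \cup R$; the domination constraints guarantee every vertex of $C$ is dominated, $I_1 \subseteq D$ covers itself, and every $u \in I \setminus I_1$ has a neighbour in $C_1$ by definition of $I_1$, so $D$ is a dominating set; the membership constraints plus the argument above for $I$ give $M(v,D) \le k$ everywhere. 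Finally, since the ILP has at most $2|C|$ constraints and entries bounded by $n$ (so $O(\log n)$ bits each), Theorem~\ref{thm:ILPfptconsraints} solves it in time FPT in $|C|$, and there are only polynomially many preprocessing steps, proving the lemma.

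The main obstacle I anticipate is the bookkeeping around which vertices are already dominated versus not, and cleanly establishing that the membership of $I$-vertices needs no explicit constraint — this requires carefully exploiting that $I$ is independent and that the forced set $C_1 \cup I_1$ has already been validated, so that the only residual freedom ($R \subseteq I_e$) cannot push any $I$-vertex over the threshold. Getting the boundary cases right (e.g. a vertex $v \in C$ that lies in both $N(I_1)$ and $C \setminus N[C_1]$, or the degenerate situations where $|V_p| = 1$ analogues appear) is where the argument must be stated precisely rather than waved through.
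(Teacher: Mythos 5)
Your proposal is correct and follows essentially the same route as the paper: the paper's proof of this lemma simply points to the ILP described in the preceding text (variables for $I_e$, at most two constraints per vertex of $C$, no constraints among $I$-vertices) and invokes Theorem~\ref{thm:ILPfptconsraints}, which is exactly the formulation and argument you reconstruct, only in greater detail. Your additional care about why the membership of $I$-vertices needs no explicit constraint and about the two directions of correctness is a welcome elaboration of what the paper leaves implicit, not a departure from it.
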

\begin{proof}
 Since the instance $(G,k,C,C_1)$ uniquely specifies the ILP for the choice of $R$, it follows that this ILP has ${\cal O}(|C|)$ constraints.
 From Theorem~\ref{thm:ILPfptconsraints}, we know that the ILP can be specified in FPT time with $|C|$ as the  parameter, and this proves the Lemma.
\end{proof}
\noindent

\noindent
\begin{proof} [Proof of Theorem~\ref{thm:vcFPT} ]
Given an input instance $(G,k)$ of \mmds, we first compute a minimum vertex cover $C$ in FPT time (in size of the cover as the parameter) using any of the well-known methods (see the book by Cygan et al.~\cite{CyganFKLMPPS15}, for example).   
%
Let $I = V\setminus C$ be the independent set. Now we iterate through each subset $C_1$ of $C$, and check if it can be extended to a $k$ membership dominating set $D$ such that $D \cap C = C_1$.  For each such $C_1$, we know from Lemma~\ref{lem:kmdsvc} that $I_1 = I \setminus (N(C_1) \cap I)$ must be  added to the solution, if one exists.   For each subset $C_1\subseteq C$, we assume that $C_1 \cup I_1$ goes into the solution set.  Then, we solve CMMDS on the instance $(G, k, C, C_1)$  to check if there is an $R \subseteq I_e$ such that $C_1 \cup I_1 \cup R$ in a $k$ membership dominating set.  From Lemma~\ref{lem:cmmdsfpt}, we know that this check can be solved in FPT time. It thus follows that \mmds is FPT when parameterized by the size of vertex cover.
\end{proof}
\subsection{Lower bound assuming ETH}
We show that there is no sub-exponential-time parameterized algorithm for \mmds when the parameter is \emph{the vertex cover number}, using a reduction from $3$-\textsc{SAT}.  By the ETH, we know that $3$-\textsc{SAT} does not have a sub-exponential-time algorithm, and thus the reduction proves the lower bound for \mmds.
\begin{proof} [Proof of Theorem~\ref{thm:vclowerbound} ]
 Let $\phi$ be a boolean formula on $n$ variables $X = \{x_1, x_2,\dots, x_n\}$ having $m$ clauses $C = \{C_1, C_2,\dots , C_m\}$. We construct a graph $G=(V,E)$ from the input formula $\phi$ such that $\phi$ has a satisfying assignment if and only if $G$ has  a $k$ membership dominating set. \\
\textbf{Construction of graph $G$}\\
\begin{figure}[ht]
\centering
\includegraphics[scale=0.2]{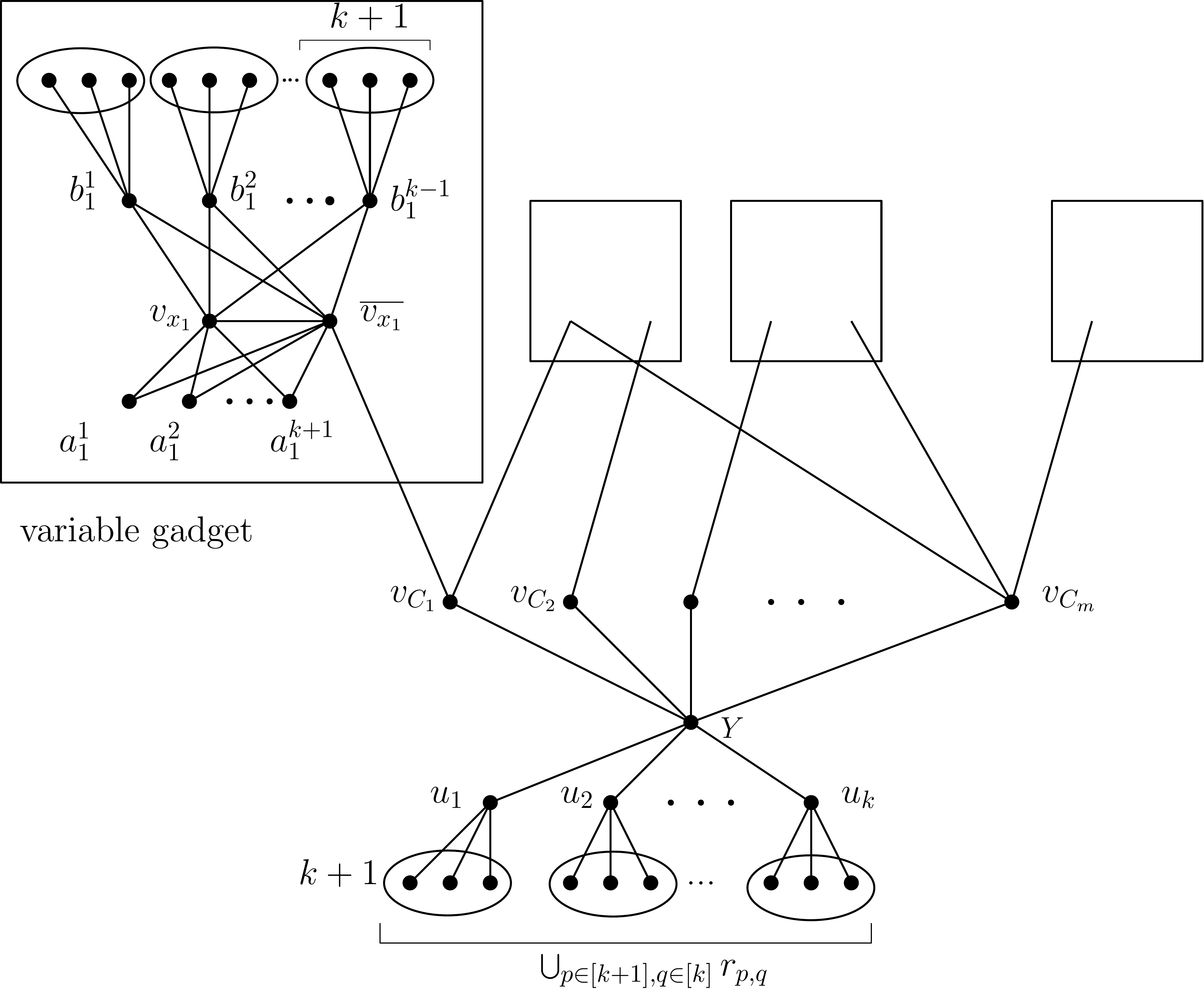} 
\caption{Construction for reduction from $3$-\textsc{SAT} to \mmds.} 
\label{fig:sat-to-kmmds}
\end{figure}
We construct a variable gadget and a clause gadget. For each variable $x_i,~1\leq i\leq n$ in $\phi$, create two vertices $v_{x_i}$ and $v_{\overline{x_i}}$, denoting its literals, with an edge between them. Make both $v_{x_i}$ and $v_{\overline{x_i}}$ adjacent to $k+1$ degree-two vertices labelled $a_i^j : 1\leq j \leq k+1$ and another set of $k-1$ vertices $b_i^j : 1\leq j \leq k-1$. Each $b_i^j$ is in turn adjacent to $k+1$ pendant vertices $d_{i,j}^t: 1\leq t \leq k+1$. This completes the variable gadget for a variable $x_i$.\\
\noindent For each  clause $C_l:1\leq l \leq m$, create a vertex $v_{C_l}$. For each clause $C_l$, make $v_{C_l}$ adjacent to a vertex $Y$. $Y$ is again connected to $k$ more vertices $u_q:1\leq q \leq k$. Each $u_q$ is in turn adjacent to $k+1$ pendant vertices $r_q^p : 1\leq p \leq k+1$. This is the clause gadget for graph $G$.\\
Finally, create  edges between clause vertices and those literal vertices which are in the clause.  The reduction is illustrated in
Figure~\ref{fig:sat-to-kmmds}.

\begin{claimnew}
\label{claim:gadgetVC}
The vertex cover number of graph $G$ is $(n+1)(k+1)$. 
\end{claimnew}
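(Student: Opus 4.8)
The plan is to prove the claim by establishing matching upper and lower bounds on $\tau(G)$, the vertex cover number of $G$.

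For the upper bound, I would write down an explicit vertex cover $W$ of size $(n+1)(k+1)$ and check that it works. The natural choice is to take, from each variable gadget, the two literal vertices $v_{x_i},v_{\overline{x_i}}$ together with the $k-1$ branch vertices $b_i^1,\dots,b_i^{k-1}$ (that is $k+1$ vertices per variable), and from the clause gadget the vertex $Y$ together with $u_1,\dots,u_k$ (again $k+1$ vertices), for a total of $n(k+1)+(k+1)=(n+1)(k+1)$ vertices. Then one verifies, edge type by edge type, that $W$ is a vertex cover: every edge inside a variable gadget is incident to $v_{x_i}$, $v_{\overline{x_i}}$, or some $b_i^j$; every pendant edge at a $b_i^j$ is incident to $b_i^j$; every clause-gadget edge is incident to $Y$ or some $u_q$; and every edge joining a clause vertex to a literal vertex is incident to that literal vertex. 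This step is entirely routine.

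For the lower bound, I would use the elementary fact that if $H_1,\dots,H_t$ are pairwise vertex-disjoint subgraphs of $G$ then $\tau(G)\ge\sum_{s\in[t]}\tau(H_s)$, and exhibit such a family whose vertex cover numbers already sum to $(n+1)(k+1)$. For each variable $i$, take the triangle on $\{v_{x_i},v_{\overline{x_i}},a_i^1\}$, which has vertex cover number $2$, and the $k-1$ vertex-disjoint stars $K_{1,k+1}$ centered at $b_i^1,\dots,b_i^{k-1}$ (with leaves $d_{i,j}^1,\dots,d_{i,j}^{k+1}$), each with vertex cover number $1$; this contributes $2+(k-1)=k+1$. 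For the clause gadget, take the $k$ vertex-disjoint stars $K_{1,k+1}$ centered at $u_1,\dots,u_k$ (with leaves $r_q^1,\dots,r_q^{k+1}$) together with the star centered at $Y$ with leaves $v_{C_1},\dots,v_{C_m}$, contributing $k+1$ in total (here we use $m\ge1$). All of these subgraphs are pairwise vertex-disjoint, since the subgraphs chosen for variable $i$ lie inside the $i$-th variable gadget while the clause-gadget subgraphs use only $Y$, the vertices $u_q$, $r_q^p$, and the clause vertices $v_{C_l}$, none of which occur in any variable gadget. Summing gives $\tau(G)\ge n(k+1)+(k+1)=(n+1)(k+1)$, and combined with the upper bound this yields $\tau(G)=(n+1)(k+1)$.

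The main thing to be careful about — the only place a sloppy argument could slip — is the disjointness bookkeeping in the lower bound: one must ensure that the degree-two vertices $a_i^j$ (which are adjacent to both literal vertices) and the vertex $Y$ are each used in only one chosen subgraph, and that each chosen subgraph genuinely has the stated vertex cover number (using $\tau(K_{1,\ell})=1$ for $\ell\ge1$ and $\tau(K_3)=2$). Everything else is direct verification; the degenerate case $m=0$ can be ignored since an empty $3$-CNF formula is trivially satisfiable.
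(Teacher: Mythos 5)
Your proof is correct, and it is somewhat more careful than what the paper actually does. The upper bound half is identical to the paper's argument: the paper exhibits exactly the set $\{v_{x_i},v_{\overline{x_i}}\}\cup\{b_i^j\}\cup\{Y\}\cup\{u_q\}$ and counts $2n+n(k-1)+k+1=(n+1)(k+1)$. Where you diverge is the lower bound: the paper merely asserts that a minimum vertex cover ``contains'' these vertices ``to cover all edges,'' which is an implicit forced-vertex argument (each $b_i^j$, each $u_q$, each literal vertex is cheaper to take than its $k+1$ private neighbours), but it is never spelled out and is not quite literally true as stated (e.g.\ for $m=1$ one could take $v_{C_1}$ in place of $Y$). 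Your packing argument via pairwise vertex-disjoint subgraphs --- one triangle $\{v_{x_i},v_{\overline{x_i}},a_i^1\}$ plus $k-1$ stars per variable gadget, and $k$ stars plus the star at $Y$ in the clause gadget --- gives a clean, checkable certificate that $\tau(G)\ge(n+1)(k+1)$, and your disjointness bookkeeping is right (the stars at the $b_i^j$ and $u_q$ avoid the literal vertices and $Y$, and the star at $Y$ avoids the $u_q$). The only caveats, which you already flag, are the assumptions $m\ge 1$ and $k\ge 1$. It is worth noting that for the purpose this claim serves in the paper (the ETH lower bound), only the upper bound $\textbf{vc}(G)=\mathcal{O}(n)$ for constant $k$ is actually used, so the paper can afford to be terse about minimality; your version establishes the claim as literally stated.
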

\begin{proof}
A minimum vertex cover of graph $G$  contains $\big\{v_{x_i}, v_{\overline{x_i}} ~|~ 1\leq i \leq n\big\}$, $\big\{b_i^j ~|~ 1\leq j \leq k-1, 1\leq i \leq n\big\}$, $Y$ and $\big\{u_q~|~1\leq q \leq k\big\}$ to cover all edges. Therefore $|VC(G)| = 2n + n(k-1) + k+1 = (n+1)(k+1)$. When $k$ is ${\cal O}(1)$, $|VC(G)| = {\cal O}(n)$.
\end{proof}

\begin{lemma}
\label{dSAT:MMDS}
If $\phi$ has a satisfying assignment then $G$ has a dominating set with membership value $k$.
\end{lemma}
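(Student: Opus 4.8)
The plan is to take a satisfying assignment of $\phi$ and turn it into a $k$-membership dominating set $D$ of $G$ in the natural way: for each variable $x_i$, put into $D$ exactly one of the two literal vertices $v_{x_i}, v_{\overline{x_i}}$, namely the one whose literal is set to true by the assignment. To handle the clause gadget, I would also put the hub vertex $Y$ into $D$. Then I would add the necessary ``spike'' vertices: each $b_i^j$ has $k+1$ pendant neighbours $d_{i,j}^t$, so none of those can be dominated unless $b_i^j \in D$ (or one of the pendants, but that is clearly worse), so I include all the $b_i^j$ in $D$; likewise each $u_q$ has $k+1$ pendants $r_q^p$, so I include every $u_q$ in $D$. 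That is, set
\[
D = \{ \text{true-literal vertex of } x_i : 1 \le i \le n \} \cup \{ b_i^j : 1\le i\le n,\ 1\le j\le k-1\} \cup \{Y\} \cup \{u_q : 1\le q\le k\}.
\]

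Next I would verify that $D$ is a dominating set. The literal vertices $v_{x_i}, v_{\overline{x_i}}$ dominate each other (they are adjacent), so both are covered since one of them is in $D$; the vertices $a_i^j$ are adjacent to both literal vertices, hence dominated; the $b_i^j$ are in $D$; their pendants $d_{i,j}^t$ are dominated by $b_i^j$; the clause vertices $v_{C_l}$ are dominated by $Y$; $Y\in D$; the $u_q$ are in $D$ and dominate $Y$ as well as their pendants $r_q^p$. So $D$ dominates $V(G)$.

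The main obstacle, and the part that needs the satisfying-assignment hypothesis, is checking the membership bound $M(v,D)\le k$ for every vertex. For most vertices this is a routine count: a literal vertex $v_{x_i}$ (say the true one) has $N[v_{x_i}]$ consisting of itself, the other literal vertex, the $k+1$ vertices $a_i^j$, the $k-1$ vertices $b_i^j$, and some clause vertices; intersecting with $D$ we pick up itself and the $k-1$ vertices $b_i^j$ — that is $k$ — and I must confirm no clause vertex of $v_{x_i}$ lies in $D$ (clause vertices are never in $D$) so the count is exactly $k$. For the false literal vertex the count is only $k-1$. The pendants $d_{i,j}^t$ and $r_q^p$ have exactly one neighbour in $D$. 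The $a_i^j$ have exactly one neighbour in $D$ (whichever literal vertex is true). The $b_i^j$ have in $N[b_i^j]\cap D$: themselves plus whichever literal vertices are true among $\{v_{x_i},v_{\overline{x_i}}\}$ — that is at most $2$, well under $k$ for $k\ge 2$. Each $u_q\in D$ has neighbours $Y$ and its pendants; $N[u_q]\cap D = \{u_q, Y\}$, size $2$. The crucial vertex is $Y$: $N[Y] = \{Y\} \cup \{u_q : 1\le q\le k\} \cup \{v_{C_l} : 1\le l\le m\}$; since no clause vertex is ever in $D$, we get $N[Y]\cap D = \{Y\}\cup\{u_1,\dots,u_k\}$, which has size $k+1$ — so I would need to \emph{not} include $Y$ in $D$, and instead rely on the literal vertices to dominate the clause vertices. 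This is exactly where satisfiability is used: since the assignment satisfies every clause $C_l$, some literal of $C_l$ is true, hence its literal vertex is in $D$ and dominates $v_{C_l}$. With $Y\notin D$, we have $N[Y]\cap D = \{u_1,\dots,u_k\}$, size exactly $k$. I would then re-examine the membership of each clause vertex $v_{C_l}$: $N[v_{C_l}]$ consists of $v_{C_l}$, $Y$, and its (at most three) literal vertices; among these, $Y\notin D$ and at least one, at most three literal vertices are in $D$, so $M(v_{C_l},D)\le 3\le k$ provided $k\ge 3$ (and the degenerate small-$k$ cases can be checked directly or assumed away). Assembling these counts gives $M(v,D)\le k$ for all $v$, completing the proof.
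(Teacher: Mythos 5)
Your proposal is correct and constructs exactly the same dominating set as the paper (one true-literal vertex per variable, all the $b_i^j$, all the $u_q$, and crucially not $Y$), with the same use of satisfiability to dominate the clause vertices while keeping $M(Y,D)=k$. One negligible slip: the false literal vertex has membership $k$, not $k-1$ (it sees the true literal vertex plus the $k-1$ vertices $b_i^j$), but the bound $\le k$ still holds, so the argument is unaffected.
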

\begin{proof}
Let $A:\{x_i | i\in [n]\}\to \{0,1\}$ be a satisfying assignment for $\phi$. Now, we construct a feasible solution $S$ for the MMDS problem as follows.
\begin{itemize}
	\item For each $i\in [n]$, 
	\begin{itemize}
		\item add  $v_{x_i}$ if $A[x_i] = 1$ or $v_{\overline{x_i}}$ if $A[x_i] = 0$  to $S$.
		\item add $\{b_i^j~ |~ 1\leq j \leq k-1\}$ to $S$.
	\end{itemize}
	\item Add $\{u_q~|~1\leq q \leq k\}$ to $S$.
	
\end{itemize}

We claim that $S$ is a dominating set for $G$ and has membership at most $k$. First, we show that $S$ is a dominating set. In the variable gadget, exactly one among $v_{x_i}$ or $v_{\overline{x_i}}$, and all  of $b_i^j$'s are in $S$. They dominate all other vertices in the variable gadget. It is given that $A$ is a satisfying assignment. i.e,   for each clause $C_l$, there is at least one literal assigned 1. Therefore each vertex $v_{C_l}$  is dominated by the vertex corresponding to the literal assigned 1 in $C_l$.  The vertices $Y$ and $\{r_q^p ~|~ p\in [k+1], q\in [k]\}$ are dominated by $\{u_q:q\in [k]\}$. \\
Next we show that the membership of any vertex in $G$ is at most $k$. In each variable gadget, maximum membership of $k$ is attained by the vertices $v_{x_i}$ and $v_{\overline{x_i}}$. Each clause vertex has membership at most 3. Vertex set $Y$ has the maximum membership of $k$ in clause gadget. 
\end{proof}

 \begin{lemma}
\label{MMDS:dSAT}
If $G$ has a dominating set with membership value $k$,  then  $\phi$ has a satisfying assignment. 
\end{lemma}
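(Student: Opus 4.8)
The plan is to start from a hypothetical $k$-membership dominating set $S$ of $G$ and progressively force its intersection with each gadget, using the pendant-vertex attachments to pin down which vertices must be in $S$, and then read off a satisfying assignment for $\phi$ from the literal vertices chosen into $S$.

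First I would deal with the vertices that are forced into $S$ by pendant attachments. For every $i \in [n]$ and $j \in [k-1]$, the vertex $b_i^j$ has $k+1$ pendant neighbours $d_{i,j}^t$; if $b_i^j \notin S$ then every such pendant must lie in $S$ (its only neighbour is $b_i^j$), giving $M(b_i^j,S) \ge k+1$, a contradiction. Hence $b_i^j \in S$ for all $i,j$. The same argument applied to each $u_q$ (which has $k+1$ pendant neighbours $r_q^p$) shows $u_q \in S$ for all $q \in [k]$. Next I would handle the clause side: since all $k$ vertices $u_1,\dots,u_k$ lie in $N[Y] \cap S$, we already have $M(Y,S) \ge k$, so feasibility forces $Y \notin S$ and $v_{C_l} \notin S$ for every clause $C_l$. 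Because $v_{C_l}$ must still be dominated and its only neighbours are $Y$ (not in $S$) and the literal vertices occurring in $C_l$, at least one literal vertex of $C_l$ belongs to $S$.

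Then I would analyze the variable gadget and show that for each $i$ exactly one of $v_{x_i}, v_{\overline{x_i}}$ lies in $S$. For ``at least one'': if neither is in $S$, then each of the $k+1$ degree-two vertices $a_i^j$ (whose only neighbours are $v_{x_i}$ and $v_{\overline{x_i}}$) must be in $S$, and together with the already-forced $k-1$ vertices $b_i^j$ this gives $M(v_{x_i},S) \ge (k+1)+(k-1) = 2k > k$, a contradiction. For ``at most one'': if both are in $S$, then $v_{x_i}$ sees itself, $v_{\overline{x_i}}$, and all $k-1$ vertices $b_i^j$, so $M(v_{x_i},S) \ge 2 + (k-1) = k+1$, again a contradiction. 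Define the assignment $A$ by $A(x_i) = 1$ if $v_{x_i} \in S$ and $A(x_i) = 0$ if $v_{\overline{x_i}} \in S$. Finally I would verify that $A$ satisfies $\phi$: for each clause $C_l$, the literal vertex of $C_l$ that lies in $S$ (guaranteed above) corresponds, by the definition of $A$, to a literal that $A$ makes true, so $C_l$ is satisfied; hence $A$ satisfies $\phi$.

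The delicate point — rather than a genuine obstacle — is the counting in the variable-gadget step: one has to check that the gadget cardinalities ($k+1$ copies of $a_i^j$, $k-1$ copies of $b_i^j$, $k+1$ pendants beneath each $b_i^j$ and each $u_q$, and $k$ copies of $u_q$) make the membership bounds at $b_i^j$, $u_q$, $Y$, and the literal vertices tight (or over-tight) in exactly the cases we need to exclude, in particular that $2k > k$ and $2+(k-1) > k$ for all $k \ge 1$, and that no edge between a clause vertex and a literal vertex can either inflate $M(Y,S)$ or serve to dominate $v_{C_l}$ in place of a literal vertex. Once these inequalities are recorded, the rest of the argument is a straightforward chain of implications.
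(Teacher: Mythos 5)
Your proof is correct and follows essentially the same route as the paper's: force $b_i^j$ and $u_q$ into $S$ via the $k{+}1$ pendants, use $M(Y,S)\ge k$ to exclude $Y$ and the clause vertices, force exactly one literal vertex per variable by the membership bound, and read off the assignment from the literal vertices dominating each $v_{C_l}$. Your counting at the variable gadget is in fact slightly more explicit than the paper's, but the argument is the same.
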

\begin{proof}
Let $S$ be feasible solution for MMDS in graph $G$. Then $S$ has the following properties:
\begin{itemize}
	\item In every variable gadget for a variable $x_i, ~1\leq i\leq n$,
	\begin{itemize}
		\item $\{b_i^j~ |~ 1\leq j \leq k-1\}$ must be there in $S$. If there is any $b_i^j\notin S$, all $d_{i,j}^t: 1\leq t \leq k+1$ should be included in $S$ which will violate the membership property by making the membership of $b_i^j$ to be $k+1$.
		\item Either $v_{x_i}$ or $v_{\overline{x_i}}$ must be there in $S$, in order to dominate $\{a_i^j ~|~ 1\leq j \leq k+1\}$. Note that both $v_{x_i}$ and  $v_{\overline{x_i}}$ together cannot be there in $S$ since it violates the membership property of both vertices.
	\end{itemize}
	\item $\{u_q:1\leq q \leq k\}\in S$. If any $u_q\notin S$, all $r_q^p : 1\leq p \leq k+1$ must be included in $S$ which violates the membership property for $u_q$. 
	\item $\{v_{C_l},~ 1\leq l \leq m\}\notin S$, since inclusion of any clause vertex $v_{C_l}$ violates the membership property for vertex $Y$.
	\item Out of the three literal vertices in any clause $C_l$, atleast one will be included in  $S$ in order to dominate corresponding clause vertex $v_{C_l}$.
\end{itemize}
  
    It follows from the above properties that atleast one literal vertex from every clause will be included in $S$ and assigning 1 to those literals makes a satisfying assignment for the boolean formula $\phi$.
\end{proof}
From Lemma~\ref{dSAT:MMDS} and Lemma~\ref{MMDS:dSAT}, it follows that the $3$-\textsc{SAT} can be reduced to \mmds parameterized by vertex cover number. Therefore a $2^{o(\textbf{vc}(G))}n^{{\cal O}(1)}$ algorithm for \mmds will give a $2^{o(n)}$ algorithm for $3$-\textsc{SAT} which is a violation of ETH. Hence it is proved that there is no sub-exponential algorithm for \mmds when parameterized by vertex cover number.
\end{proof}

\section{Conclusion}

In this paper we study the parameterized complexity of the \textsc{Minimum Membership Dominating Set} problem,  which requires finding a dominating set such that each vertex in the graph is dominated minimum possible times. We start our analysis by showing that in spite of having no constraints on the size of the solution, unlike \textsc{Dominating Set}, MMDS turns out to be W[1]-hard when parameterized by pathwidth (and hence treewidth). We further show that the problem remains W[1]-hard for split graphs when the parameter is the size of the membership. For general graphs we  prove that MMDS is FPT when parameterized by the size of vertex cover. Finally, we show that assuming ETH, the problem does not admit a sub-exponential algorithm when parameterized by the size of vertex cover, thus showing our FPT algorithm to be optimal. There are many related open problems that are yet to be explored. One such problem is analyzing the complexity of \mmds in chordal graphs. Other directions involve structural parameterization of \mmds with respect to other parameters such as maximum degree, distance to bounded degree graphs, bounded genus and maximum number of leaves in a spanning tree.
We have got an idea from an anonymous reviewer from IPEC 2021 that a W[2] hardness could be proved. We are working on it.


\bibliographystyle{splncs04}
\bibliography{references}


\end{document}